\algnewcommand\algorithmicforeach{\textbf{for each}}
\pgfplotsset{compat=1.3}
\DeclareMathOperator*{\argmax}{arg\,max}
\DeclareMathOperator*{\argmin}{arg\,min}
\newcommand{\snr}{\mathsf{snr}}
\newcommand{\E}{\mathbb{E}}
\newtheorem{thm}{Theorem}
\newtheorem{lem}[thm]{Lemma}
\newtheorem{prop}[thm]{Proposition}
\newtheorem{remark}{Remark}
\newtheorem{defin}{Definition}
\definecolor{darkgreen}{rgb}{0.55, 0.71, 0.13}
\long\def\symbolfootnote[#1]#2{\begingroup%
	\def\thefootnote{\fnsymbol{footnote}}\footnote[#1]{#2}\endgroup} 
\title{The Most Informative Order Statistic and its Application to Image Denoising
}% 
\author{
\IEEEauthorblockN{Alex Dytso$^{\star}$,  Martina Cardone$^*$,  Cynthia Rush$^\dagger$}
$^{\star}$ New Jersey Institute of Technology, Newark, NJ 07102, USA Email: alex.dytso@njit.edu\\
$^*$ University of Minnesota, Minneapolis, MN 55404, USA, Email: mcardone@umn.edu\\
$^\dagger$ Columbia University, New York, NY 10025, USA, Email: cynthia.rush@columbia.edu
\vspace{-0.9em}
\thanks{The work of M. Cardone was supported in part by the U.S. National Science Foundation under Grant CCF-1849757. 
}
}
\begin{document}
\IEEEoverridecommandlockouts
\maketitle

\begin{abstract}
We consider the problem of finding the subset of order statistics that contains the most information about a sample of random variables drawn independently from some known parametric distribution. We leverage information-theoretic quantities, such as entropy and mutual information, to quantify the level of informativeness and rigorously characterize the amount of information contained in any subset of the complete collection of order statistics. As an example, we show how these informativeness metrics can be evaluated for a sample of discrete Bernoulli and continuous Uniform random variables. Finally, we unveil how our most informative order statistics framework can be applied to image processing applications. Specifically, we investigate how the proposed measures can be used to choose the coefficients of the L-estimator filter to denoise an image corrupted by random noise. We show that both for discrete (e.g., salt-pepper noise) and continuous (e.g., mixed Gaussian noise) noise distributions, the proposed method is competitive with off-the-shelf filters, such as the median and the total variation filters, as well as with wavelet-based denoising methods.
\end{abstract}

\section{Introduction} 
\label{sec:Intro}
Consider a random sample  $X_1, X_2, \ldots, X_n$ drawn independently from some known parametric distribution $p(x| \theta)$ where the parameter $\theta$ may or may not be known.  Let  the random variables~(r.v.)  $X_{(1)} \le  X_{(2)} \le \ldots \le X_{(n)}$ represent the order statistics of the sample.  In particular, $X_{(1)}$ corresponds to the minimum value of the sample,  $X_{(n)}$ corresponds to the maximum value of the sample, and  $X_{( \frac{n}{2})}$   (provided that $n$ is even) corresponds to the median of the sample.   We denote the collection of the random samples as $X^n := (X_1, X_2,\ldots, X_n)$, and we use $[n]$ to denote the collection $\{1, 2, \ldots, n\}.$

As illustrated by comprehensive survey texts~\cite{HBS17, David2003Book}, order statistics have a broad range of applications including survival and reliability analysis, life testing, statistical quality control, filtering theory, signal processing, robustness and classification studies, radar target detection, and wireless communication.
In such a wide variety of  practical situations, some order statistics -- such as the minimum, the maximum, and the median -- have been analyzed and adopted more than others.
For instance, in the context of image processing (see also Section~\ref{sec:Appl}), a widely employed order statistic filter is the {\em median} filter.
However, to the best of our knowledge, there is not a {\em theoretical} study that justifies why certain order statistics should be preferred over others. Although such a universal\footnote{ A large body of the literature has focused on analyzing information measures of the (continuous or discrete) parent population of ordered statistics (examples include the differential entropy~\cite{baratpour2007some}, the R\'enyi entropy~\cite{baratpour2008characterizations,abbasnejad2010renyi}, the cumulative entropies~\cite{balakrishnan2020cumulative}, the Fisher information~\cite{zheng2009fisher}, and the $f$-divergence~\cite{OurArxiv}) and trying to show universal (i.e., distribution-free) properties for such information measures, see for instance~\cite{abbasnejad2010renyi,wong1990entropy,OurArxiv,ebrahimi2004information}.} choice can be justified when there is no knowledge of the underlying distribution, in scenarios where some knowledge is available a natural question arises: Can we somehow leverage such a knowledge to choose which is the ``best'' order statistic to consider?

The main goal of this paper is to answer the above question. Towards this end, we introduce and analyze a theoretical framework for performing `optimal' order statistic selection to fill the aforementioned
theoretical gap. 
Specifically, our framework allows to rigorously identify the subset of order statistics that contains the most information on a random sample.
As an application, we show how the developed framework can be used for image denoising to produce competitive approaches with off-the-shelf filters, as well as with wavelet-based denoising methods.
Similar ideas also have the potential to benefit other fields where order statistics find application, such as radar detection and classification.
With the goal of developing a theoretical framework for `optimal' order statistic selection,
in this work we are interested in answering  the following questions:   

\noindent \textbf{(1) } How much   `information' does a single  order statistic $X_{(i)}$ contain about the random sample $X^n$
for each $i \in [n]$?  We refer to the $X_{(i)}$  that contains the most information about the sample as \emph{the most informative order statistic}.

\noindent \textbf{(2) }  Let  $ \mathcal{S} \subseteq [n]$ be a set  of cardinality $|\mathcal{S}| = k$ and let $ X_{  ( \mathcal{S} )} =\{X_{(i)} \}_{i \in \mathcal{S}}$.  Which subset of order statistics $X_{  ( \mathcal{S} )} $ of size $k$ is the most informative with respect to the sample $X^n$?

\noindent \textbf{(3) } Given a set $\mathcal{S} \subseteq [n]$ and the collection of order statistics $X_{( \mathcal{S})}$, which \emph{additional} order statistic $X_{(i)}$ where $i \in [n]$ but $i \not \in \mathcal{S}$, adds the most information about the sample $X^n$?

One approach for defining the most informative order statistics, and the one that we investigate in this work, is to  consider the mutual information as a base measure of informativeness.  Recall that, intuitively, the mutual information between two variables $X$ and $Y$, denoted as $I(X; Y) = I(Y; X)$, measures the reduction in uncertainty about one of the variables given the knowledge of the other. Let $p(x,y)$ be the joint density of $(X,Y)$ and let $p(x), p(y)$ be the marginals. The mutual information is calculated as
\begin{equation}
\label{eq:MI_def}
I(X; Y) = \int \int p(x, y) \log \left(\frac{p(x, y)}{p(x)p(y)}\right) dx \, dy.
\end{equation}
The base of the logarithm determines the units of the  measure, and throughout the paper we use base $e$.  Notice that there is a relationship between the mutual information and the differential  entropy, namely,
\begin{equation}
I(X; Y) =h(Y) - h(Y|X),
\label{eq:MI_def_entropy}
\end{equation}
where the entropy and the conditional entropy are defined as
$h(Y) = - \int p(y) \log p(y) \ dy,$ and
$h(Y|X) = \int \int p(x, y)  \log( {p(x)}/{p(x, y)}) dy \ dx.$
The discrete analogue of~\eqref{eq:MI_def} replaces the integrals with sums, and~\eqref{eq:MI_def_entropy} holds with the differential entropy $h(Y)$ being replaced with its discrete version, denoted as $H(Y) = - \sum_y p(y) \log p(y)$.

In particular, if $X$ and $Y$ are independent -- so knowing one delivers no information about the other -- then the mutual information is zero.  
Differently,
if $X$ is a deterministic function of $Y$ and $Y$ is a deterministic function of $X$, then knowing one gives us complete information on the other.  If additionally, $X$ and $Y$ are discrete, the mutual information is then the same as the amount of information contained in $X$ or $Y$ alone, as measured by the entropy, $H(Y)$, since $H(Y|X) = 0$. If $X$ and $Y$ are continuous, the mutual information is infinite since $h(Y|X) = - \infty$ (because $(X, X)$ is singular with respect to the Lebesgue measure on $\mathbb{R}^2$).

 \section{Measures of Informativeness of Order Statistics} \label{sec:Measures}
 
In this section, we propose several metrics, all of which leverage the mutual information as a base measure of informativeness. We start by considering the mutual information between the sample $X^n$ and any order statistic $X_{(i)}$, i.e., $I(X_{(i)}; X^n)$ and find the index $i \in [n]$ that results in the largest mutual information. In the case of discrete r.v., we have
\begin{equation*}
\begin{split}
I(X_{(i)}; X^n) &= H(X^n) - H(X^n|X_{(i)}) \\
& = \sum_{x_{(i)}} \sum_{x^n} p(x_{(i)}, x^n) \log \left(\frac{p(x_{(i)}, x^n)}{p(x_{(i)})p(x^n)}\right).
\end{split}
\end{equation*}
Such an approach works only when the sample is composed of discrete r.v.\ and does not work for continuous r.v. The reason for this is that, as highlighted in Section~\ref{sec:Intro}, when $X^n$ is a collection of continuous r.v., then $I(X_{(i)}; X^n)=\infty$ as $h(X_{(i)}|X^n) = - \infty$.    

This idea of using mutual information, however, can be salvaged by introducing noise to the sample.  For example, the informativeness of  $X_{(i)}$ can be measured by considering $I(X_{(i)}; X^n+\sigma Z^n)$ where $Z^n := (Z_1, Z_2, \ldots, Z_n)$ is a vector of   i.i.d.\  Gaussian r.v.\  independent of $X^n$ with $ \sigma$ being the noise standard deviation. 
Next, based on the above discussion, we propose three potential measures of informativeness of order statistics about the sample $X^n$, all based on the mutual information measure. 

\begin{defin}\label{def:OrderstatMI} Let $Z^n := (Z_1, Z_2, \ldots, Z_n)$ be a vector of i.i.d.\  standard Gaussian r.v.\ independent of $X^n = (X_1, X_2, \ldots, X_n)$.   
Let $\mathcal{S}\subseteq [n]$ be defined as $$\mathcal{S} =\{ ( i_1,i_2,\ldots,  i_k ): 1 \le  i_1< i_2< \ldots <  i_k \le n  \},$$ with $|\mathcal{S}|=k$. We define the following three measures of order statistic informativeness: 
\begin{align}
\mathsf{r}_1(\mathcal{S},X^n)&=I(X^n; X_{  ( \mathcal{S} ) } ), \label{eq:r1Measure} \\
\mathsf{r}_2(\mathcal{S},X^n)&=   \lim_{\sigma \to \infty}  2 \sigma^2 I(X^n+\sigma Z^n; X_{  ( \mathcal{S} )} ),   \label{eq:r2Measure}\\
\mathsf{r}_3(\mathcal{S},X^n)&=   \lim_{\sigma \to \infty}  2 \sigma^2I(X^n; X_{  ( \mathcal{S} )}+ \sigma Z^k). \label{eq:r3Measure}
\end{align}  
\end{defin} 
In Definition~\ref{def:OrderstatMI}, the measure $\mathsf{r}_1(\mathcal{S},X^n)$ computes the mutual information between a subset of order statistics $X_{  ( \mathcal{S} )}$ and the sample $X^n$.   
The measure $  \mathsf{r}_2(\mathcal{S},X^n)$  computes the slope of the mutual information at $\sigma=\infty$: intuitively, as noise becomes large, only the most informative $X_{  ( \mathcal{S} )}$ should maintain the largest mutual information.
The measure  $  \mathsf{r}_3(\mathcal{S},X^n)$ is an alternative to  $  \mathsf{r}_2(\mathcal{S},X^n)$, with noise added to  $X_{  ( \mathcal{S} )}$ instead of  $X^n$. The limits in~\eqref{eq:r2Measure} and~\eqref{eq:r3Measure} always exist, but may be infinity.

One might also  consider similar measures as in~\eqref{eq:r2Measure} and~\eqref{eq:r3Measure}, but in the limit of $\sigma$ that goes to zero, namely
\begin{equation}
\begin{split}
\label{eq:r4r5Measure}
\mathsf{r}_4(\mathcal{S},X^n) &=   \lim_{\sigma \to 0} \frac{ I(X^n+\sigma Z^n;  X_{  ( \mathcal{S} )}   )}{ \frac{1}{2} \log(1 +\frac{1}{\sigma^2} )},  \\
\mathsf{r}_5(\mathcal{S},X^n) &=   \lim_{\sigma \to 0} \frac{ I(X^n; X_{  ( \mathcal{S} )} +\sigma Z^k)}{ \frac{1}{2} \log(1 +\frac{1}{\sigma^2} )}.   
\end{split}
\end{equation}
In particular, the intuition behind $\mathsf{r}_4(\mathcal{S},X^n)$ is that the most informative  set $X_{  ( \mathcal{S} )}$  should  have the largest increase in the mutual information as the  observed sample becomes less noisy.    The measure $\mathsf{r}_5(\mathcal{S},X^n)$ is an alternative to $\mathsf{r}_4(\mathcal{S},X^n)$  where the noise is added to  $X_{  ( \mathcal{S} )}$ instead of  $X^n$. 
However, as we prove next, these measures evaluate to
\begin{equation*}
\begin{split}
  \mathsf{r}_4(\mathcal{S},X^n) &= 0,   \text{ continuous and discrete r.v.,} \\
\mathsf{r}_5(\mathcal{S},X^n) &=    \left \{ \begin{array}{ll} k,   &  \text{ continuous r.v.,} \\
0, & \text{ discrete r.v..}
\end{array}  \right. 
\end{split}
 \end{equation*} 
 Hence, these are not useful measures of information.
 
\begin{proof}
To characterize $\mathsf{r}_4(\mathcal{S},X^n)$ in~\eqref{eq:r4r5Measure}, recall that
by the data processing inequality, if $X \to  Y \to Z$ is a Markov chain then $I(X;Z|Y) = 0$.  Now, since 
$X^n +\sigma Z^n \to X^n \to X_{(\mathcal{S})}$ is a Markov chain and $I(X^n+\sigma Z^n; X_{(\mathcal{S})}| X^n) = 0$, we therefore have that
$I(X^n+\sigma Z^n;  X_{(\mathcal{S})})  =   I(X^n+\sigma Z^n; X^n,X_{(\mathcal{S})}).$
Then, by the chain rule of the mutual information,
$I(X^n+\sigma Z^n; X^n,  X_{(\mathcal{S})} ) = I(X^n+\sigma Z^n; X^n)-  I(X^n+\sigma Z^n; X^n|  X_{(\mathcal{S})} )$,
and, 
\begin{align*}
&\mathsf{r}_4(\mathcal{S},X^n) = \lim_{\sigma \to 0} \frac{ I(X^n+\sigma Z^n;  X_{(\mathcal{S})}  )}{ \frac{1}{2} \log(1 +\frac{1}{\sigma^2} )} \notag\\
 &=  \lim_{\sigma \to 0} \frac{ I(X^n+\sigma Z^n; X^n)-  I(X^n+\sigma Z^n; X^n|  X_{(\mathcal{S})} )}{ \frac{1}{2} \log(1 +\frac{1}{\sigma^2} )} \nonumber \\
& =   \mathsf{d}(X^n)- \mathsf{d}(X^n| X_{(\mathcal{S})}),
\end{align*} 
where $\mathsf{d}(X^n)$ is known as the information dimension or R\'enyi dimension~\cite{guionnet2007classical,wu2012optimal}, namely
\begin{align}
\mathsf{d}(X^n)=\left \{ \begin{array}{ll} n   &  \text{ continuous r.v.} \\
0 & \text{ discrete r.v..}
\end{array}  \right.  
\label{eq:sfd_def}
\end{align}
Similarly, since $(X_{(\mathcal{S})} +\sigma Z^k) \to X_{(\mathcal{S})} \to  X^n $ is a Markov chain with $I(X_{(\mathcal{S})} +\sigma Z^k; X_{(\mathcal{S})} | X^n) = 0$, we obtain
\begin{align*}
\mathsf{r}_5(\mathcal{S},X^n) &=   \lim_{\sigma \to 0} \frac{ I(X^n; X_{(\mathcal{S})}+\sigma Z^k)}{ \frac{1}{2} \log(1 +\frac{1}{\sigma^2} )} \\
&=   \lim_{\sigma \to 0} \frac{ I( X_{(\mathcal{S})} ;  X_{(\mathcal{S})} +\sigma Z^k)}{ \frac{1}{2} \log(1 +\frac{1}{\sigma^2} )}  = \mathsf{d}(X_{(\mathcal{S})}),
\end{align*} 
where $\mathsf{d}(\cdot)$ is defined in \eqref{eq:sfd_def}.
\end{proof} 

\begin{remark}
We emphasize that the scaling and
 Gaussian noise used above were not chosen artificially.  It can be shown that  any
 absolutely continuous perturbation with a finite Fisher information would result in equivalent limits~\cite{guo2005additive}.  Therefore, the choice of Gaussian noise was simply made for the ease of exposition and the proof.
\end{remark} 

There are a few shortcomings of the  measures just introduced. For instance,  the elements of the most informative set are not ordered based on the amount of information that each element provides. Moreover,  at this point, we are unable to quantify the amount of information that an additional order statistic adds to a given collection $X_{( \mathcal{S})}$ of order statistics. 
These shortcomings can be remedied by considering a {\em conditional} version of the measures  introduced in Definition~\ref{def:OrderstatMI}. 
\begin{defin} \label{def:CondMeas}
Under the assumptions in Definition~\ref{def:OrderstatMI}, let $\mathcal{V} \subset [n]$ such that $\mathcal{S} \cap \mathcal{V} =\varnothing$. Then, we define three conditional measures of order statistic informativeness:  
\begin{align}
\mathsf{r}_1(\mathcal{S},X^n|  \mathcal{V} )&=I(X^n; X_{  ( \mathcal{S} ) } |   X_{  ( \mathcal{V} ) } ), \label{eq:r1MeasureC} \\
\mathsf{r}_2(\mathcal{S},X^n| \mathcal{V} )&=   \lim_{\sigma \to \infty}  2 \sigma^2 I(X^n+\sigma Z^n; X_{  ( \mathcal{S} )} |   X_{  ( \mathcal{V} ) }),   \label{eq:r2MeasureC}\\
\mathsf{r}_3(\mathcal{S},X^n| \mathcal{V} )&=   \lim_{\sigma \to \infty}  2 \sigma^2I(X^n; X_{  ( \mathcal{S} )}+ \sigma Z^k |   X_{  ( \mathcal{V} ) }).\label{eq:r3MeasureC}
\end{align}  
\end{defin}

\section{Characterization of the Informativeness Measures}
\label{sec:CharInfoMeas}
In this section,
we characterize the measures of informativeness of order statistics proposed in Definition~\ref{def:OrderstatMI} and Definition~\ref{def:CondMeas}. 
In particular, we have the following theorem.
\begin{thm} 
 \label{thm:RepresentationsCond}
Let $\mathcal{S} \subseteq [n]$ such that $|\mathcal{S}|=k$, and $\mathcal{V} \subset [n]$ such that $\mathcal{S} \cap \mathcal{V} =\varnothing$. Then, the metrics in Definition~\ref{def:CondMeas} evaluate to
\begin{align}
\mathsf{r}_1(\mathcal{S},X^n|  \mathcal{V} )&\!=\!  \left \{ \begin{array}{ll}  \!\!\!\!H(X_{(\mathcal{S})}  |X_{  ( \mathcal{V} ) }),   &  \!\text{for discrete r.v.,} \\
\!\!\!\!\infty,& \!\text{otherwise,}
\end{array}  \right.  \\
\mathsf{r}_2(\mathcal{S},X^n|  \mathcal{V} )&\!=\!   \E[ \| \E[X^n| X_{  ( \mathcal{V} ) }]\!-\! \E[X^n| X_{(\mathcal{S})},X_{  ( \mathcal{V} ) }] \|^2],
\label{eq:m2_solvedcond} \\
 \mathsf{r}_3(\mathcal{S},X^n|  \mathcal{V} )&=  \E[ \| X_{(\mathcal{S})} -\E[X_{(\mathcal{S})}|   X_{  ( \mathcal{V} ) } ] \|^2] . 
 \end{align}
Taking $ \mathcal{V}\! =\! \varnothing$ gives an evaluation of the metrics in Definition~\ref{def:OrderstatMI}, namely $\mathsf{r}_1(\mathcal{S},X^n)=  H(X_{(\mathcal{S})})$ for discrete r.v.\ and $\mathsf{r}_1(\mathcal{S},X^n)=  \infty$ otherwise, $\mathsf{r}_2(\mathcal{S},X^n) =    \E[ \| \E[X^n]- \E[X^n| X_{(\mathcal{S})}] \|^2]$, and $\mathsf{r}_3(\mathcal{S},X^n)=  \E[ \| X_{(\mathcal{S})} -\E[X_{(\mathcal{S})} ] \|^2]$.
\end{thm}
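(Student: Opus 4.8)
The plan is to handle the three measures separately, leaning on two elementary facts: that each order statistic appearing in $X_{(\mathcal{S})}$ and $X_{(\mathcal{V})}$ is a deterministic (sorting) function of $X^n$, and that the small-noise behaviour of the mutual information across a Gaussian channel is controlled by the minimum mean-square error at vanishing signal-to-noise ratio. I would dispatch $\mathsf{r}_1(\mathcal{S},X^n\mid\mathcal{V})$ first. In the discrete case I expand $I(X^n;X_{(\mathcal{S})}\mid X_{(\mathcal{V})}) = H(X_{(\mathcal{S})}\mid X_{(\mathcal{V})}) - H(X_{(\mathcal{S})}\mid X^n,X_{(\mathcal{V})})$; since $X_{(\mathcal{S})}$ is a deterministic function of $X^n$, the last term is $0$, leaving $H(X_{(\mathcal{S})}\mid X_{(\mathcal{V})})$. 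In the continuous case the same decomposition holds with differential entropies, and the conditional law of $X_{(\mathcal{S})}$ given $X^n$ is a point mass, singular with respect to Lebesgue measure, so $h(X_{(\mathcal{S})}\mid X^n,X_{(\mathcal{V})}) = -\infty$ exactly as observed in Section~\ref{sec:Intro}, giving $\mathsf{r}_1 = +\infty$.

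The crux is the single lemma underlying both $\mathsf{r}_2$ and $\mathsf{r}_3$: a low-SNR expansion of the conditional mutual information. After the scale-invariant substitution $s = 1/\sigma^2$, for which $I(X^n+\sigma Z^n;\cdot) = I(\sqrt{s}\,X^n + Z^n;\cdot)$ and $2\sigma^2 = 2/s$, each limit becomes an initial-slope computation $\lim_{s\to 0}\tfrac{2}{s}I(\cdot)$. I would prove that, for any triple $(U,W,V)$ with $V$ the input to the channel, $I(U;\sqrt{s}\,V + Z\mid W) = \tfrac{s}{2}\,\E[\,\|\E[V\mid U,W] - \E[V\mid W]\|^2\,] + o(s)$. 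Writing this mutual information as $h(\sqrt{s}\,V+Z\mid W) - h(\sqrt{s}\,V+Z\mid U,W)$ and applying the I-MMSE relation of Guo--Shamai--Verd\'u~\cite{guo2005additive} inside each conditional differential entropy, the slope at $s=0$ equals $\tfrac12\big(\E\,\Tr\Cov(V\mid W) - \E\,\Tr\Cov(V\mid U,W)\big)$; the law of total variance then rewrites this difference as $\E[\,\|\E[V\mid U,W]-\E[V\mid W]\|^2\,]$, and dividing by $s/2$ and letting $s\to0$ yields the limit.

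With the lemma in hand, $\mathsf{r}_2$ follows by taking $(U,W,V) = (X_{(\mathcal{S})},X_{(\mathcal{V})},X^n)$, since there the clean signal passed through the channel is the full sample $X^n$, giving $\E[\,\|\E[X^n\mid X_{(\mathcal{S})},X_{(\mathcal{V})}] - \E[X^n\mid X_{(\mathcal{V})}]\|^2\,]$. For $\mathsf{r}_3$ the noise is instead added to $X_{(\mathcal{S})}$, so I take $(U,W,V) = (X^n,X_{(\mathcal{V})},X_{(\mathcal{S})})$; the decisive simplification is that $X_{(\mathcal{S})}$ is a deterministic function of $X^n$, whence $\E[X_{(\mathcal{S})}\mid X^n,X_{(\mathcal{V})}] = X_{(\mathcal{S})}$ and the coefficient collapses to $\E[\,\|X_{(\mathcal{S})} - \E[X_{(\mathcal{S})}\mid X_{(\mathcal{V})}]\|^2\,]$. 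Setting $\mathcal{V}=\varnothing$ in all three recovers the unconditional statements. The main obstacle I anticipate is rigor rather than algebra: one must justify that the $o(s)$ remainder in the expansion is controlled uniformly enough to survive the outer expectation over the conditioning variable $X_{(\mathcal{V})}$, so that the limit and $\E_{X_{(\mathcal{V})}}$ may be interchanged, and one must verify the finite-second-moment hypotheses on the order statistics that make the zero-SNR slope equal the stated covariance trace. These hold generically for continuous parents and trivially in the bounded discrete case.
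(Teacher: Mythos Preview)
Your proposal is correct and rests on the same engine as the paper's proof: the low-SNR slope of mutual information via the I-MMSE relationship. Your handling of $\mathsf{r}_1$ is identical. The differences are organizational. For $\mathsf{r}_2$, the paper invokes the generalized I-MMSE theorem directly (this is your lemma with $W=\varnothing$) to obtain the difference-of-MMSEs form $\E\|X^n-\E[X^n]\|^2 - \E\|X^n-\E[X^n\mid X_{(\mathcal{S})}]\|^2$, and then separately uses the orthogonality principle to rewrite this as $\E\|\E[X^n]-\E[X^n\mid X_{(\mathcal{S})}]\|^2$; you fold that rewriting into the lemma via the law of total variance, which is the same identity. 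For $\mathsf{r}_3$ the paper takes a shorter, different path: it uses the Markov chain $(X_{(\mathcal{S})}+\sigma Z^k)\to X_{(\mathcal{S})}\to X^n$ to collapse $I(X^n;X_{(\mathcal{S})}+\sigma Z^k)$ to $I(X_{(\mathcal{S})};X_{(\mathcal{S})}+\sigma Z^k)$ and then cites a standard low-SNR result of Prelov--Verd\'u to read off the trace covariance. Your unified lemma buys you a single argument covering both $\mathsf{r}_2$ and $\mathsf{r}_3$ and the conditional case in one stroke (the paper proves only $\mathcal{V}=\varnothing$ and asserts the general case follows similarly); the paper's Markov-chain reduction for $\mathsf{r}_3$ buys a shorter citation-based finish. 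The regularity caveats you flag are real but are not addressed explicitly in the paper either.
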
 
\begin{proof}
For simplicity, we focus on the case $\mathcal{V}=\varnothing$. The proof for arbitrary $\mathcal{V}$ follows along the same lines. First, assume that $X^n$ is a sequence of discrete r.v. Then, by using the relationship between mutual information and entropy given in~\eqref{eq:MI_def_entropy} we have,
$
I(X^n; X_{(\mathcal{S})})= H(X_{(\mathcal{S})}) - H(X_{(\mathcal{S})}|X^n)=H(X_{(\mathcal{S})}),
$
where the last equality uses that $H(X_{(\mathcal{S})}|X^n) = 0$ since $X_{(\mathcal{S})}$ is fully determined given the value of the sequence $X^n$.  As mentioned in Section~\ref{sec:Measures}, if $X^n$ is a sequence of continuous r.v. then $I(X^n; X_{(\mathcal{S})})= h(X_{(\mathcal{S})}) - h(X_{(\mathcal{S})}|X^n) = \infty$ since $h(X_{(\mathcal{S})}|X^n) = -\infty$. This characterizes $\mathsf{r}_1(\mathcal{S},X^n)$

We now characterize the measure $\mathsf{r}_2(\mathcal{S},X^n)$. We have that
\begin{align}
& \mathsf{r}_2(\mathcal{S},X^n)   =2 \lim_{\sigma \to \infty}   \sigma^2 I(X^n+\sigma Z^n; X_{(\mathcal{S})}) \nonumber \\
   & \stackrel{{\rm{(a)}}}{=}   2 \lim_{\snr \to 0}  \frac{I( \sqrt{\snr} X^n+ Z^n; X_{(\mathcal{S})})}{\snr} \nonumber \\
   & \stackrel{{\rm{(b)}}}{=} 2  \frac{\rm d}{{\rm d} \snr }  I( \sqrt{\snr} X^n+ Z^n; X_{(\mathcal{S})})\Big \lvert_{\snr=0}  \nonumber \\
   &\stackrel{{\rm{(c)}}}{=}   \E\left[ \| X^n -\E[X^n|  Z^n] \|^2- \| X^n -\E[X^n|  Z^n, X_{(\mathcal{S})}] \|^2\right] \nonumber \\
   & \stackrel{{\rm{(d)}}}{=} \E\left[ \| X^n -\E[X^n] \|^2\right] \!-\!  \E\left[\| X^n -\E[X^n| X_{(\mathcal{S})}] \|^2\right], \label{eq:r2_s5}
\end{align} 
where the labeled equalities follow from:
$\rm{(a)}$ defining $\snr = 1/\sigma^2$ and noting that $I(aX;Y) = I(X; Y)$ for a constant $a$;
$\rm{(b)}$ using the fact that
\[ \lim_{\snr \to 0}  \frac{f(\snr) - f(0)}{\snr} =  \frac{\rm d}{{\rm d} a } f(a) \Big \lvert_{a=0},\]
where $f(a) = I( \sqrt{a} X^n+ Z^n; X_{(\mathcal{S})})$ with $f(0) = I(Z^n; X_{(\mathcal{S})}) = 0$;
$\rm{(c)}$ using the generalized I-MMSE relationship \cite[Thm.~10]{I-MMSE}
since $X_{(\mathcal{S})} \to X^n \to (\sqrt{\snr} X^n+ Z^n)$ is a Markov chain;
and $\rm{(d)}$ since $Z^n$ is independent of $X^n$.

To conclude the proof of $ \mathsf{r}_2(\mathcal{S},X^n)$ in~\eqref{eq:m2_solvedcond}, we would like to show that~\eqref{eq:r2_s5} is equal to $ \E[ \| \E[X^n]- \E[X^n| X_{(\mathcal{S})}] \|^2]$. We start by noting that
\begin{align}
&\E \left[ \| \E[X^n]- \E[X^n| X_{(\mathcal{S})}] \|^2 \right] \notag \\
&=  \E\left[ \left\| ( \E[X^n] - X^n ) +(X^n - \E[X^n| X_{(\mathcal{S})}]) \right\|^2 \right] \nonumber
\\& =  \E \left[\|  \E[X^n] - X^n \|^2 \right] + \E\left [ \|X^n - \E[X^n| X_{(\mathcal{S})} ] \|^2 \right] \nonumber  \\
& \hspace{8mm}+2\E \left[ ( \E[X^n] - X^n ) ^T (X^n - \E[X^n| X_{(\mathcal{S})} ]) \right]. \label{eq:AuxiRsr2}
 \end{align}
Moreover, we note that
\begin{align}
&-2\E \left[ ( \E[X^n] - X^n)^T (X^n - \E[X^n| X_{(\mathcal{S})}]) \right]\notag
\\&  = 2\E \left[   (X^n-  \E[X^n])^T X^n  - (X^n - \E[X^n]) ^T  \E[X^n| X_{(\mathcal{S})}]\right]  \notag \\
& \stackrel{{\rm{(a)}}}{=} 2\E \left[ (X^n) ^T (X^n  -   \E[X^n| X_{(\mathcal{S})}]) \right]  \notag \\
& \stackrel{{\rm{(b)}}}{=} 2  \E \left[\| X^n - \E[X^n| X_{(\mathcal{S})}]\|^2 \right], \label{eq:Auxi2Rsr2}
 \end{align}
where the labeled equalities follow from:
$\rm{(a)}$ the fact that 
\begin{align*}
&\E  \left [(\E[X^n|)^T(X^n - \E[X^n| X_{(\mathcal{S})}]) \right] \\
&=  (\E[X^n|)^T \E \left [ X^n - \E[X^n| X_{(\mathcal{S})}] \right]
\\& =(\E[X^n|)^T   \left( \E[X^n] - \E[\E[X^n| X_{(\mathcal{S})}]] \right)
\\& =(\E[X^n|)^T  \left ( \E[X^n] - \E[X^n]  \right) = 0,
\end{align*}
where in the third equality we have used the law of total expectation; 
and $\rm{(b)}$ using the orthogonality principle~\cite{Kay97}, which states that $\E [ (  \E[X^n| X_{(\mathcal{S})}]) ^T (X^n  -   \E[X^n| X_{(\mathcal{S})}])]  =0.$

By substituting~\eqref{eq:Auxi2Rsr2} back into~\eqref{eq:AuxiRsr2}, we obtain
\begin{align*}
&\E\left[ \| \E[X^n]- \E[X^n| X_{(\mathcal{S})}] \|^2 \right] \\
&= \E\left [ \|  \E[X^n] - X^n\|^2 \right] - \E\left [\|X^n - \E[X^n| X_{(\mathcal{S})}] \|^2 \right],
\end{align*}
which is precisely~\eqref{eq:r2_s5}. Hence, 
$
\mathsf{r}_2(\mathcal{S},X^n)  = \E[ \| \E[X^n]- \E[X^n| X_{(\mathcal{S})}] \|^2 ].
$

We now characterize $\mathsf{r}_3(\mathcal{S},X^n)$.  It follows by the data processing inequality, that $I(X; Z) = I(X; Y)$ for a Markov chain $X \to Y \to Z$ if $I(X; Y | Z) = 0$.  Notice that in our problem, $(X_{(\mathcal{S})} +\sigma Z^k) \to X_{(\mathcal{S})} \to  X^n $ forms a Markov chain with $I(X_{(\mathcal{S})} +\sigma Z^k; X_{(\mathcal{S})} | X^n) = 0$. Thus,
$
I(X_{(\mathcal{S})} +\sigma Z^k; X^n) = I(X_{(\mathcal{S})} +\sigma Z^k; X_{(\mathcal{S})}).
$
Therefore,
\begin{align*}
\mathsf{r}_3(\mathcal{S},X^n) & =   \lim_{\sigma \to \infty}  2 \sigma^2 I(X^n; X_{(\mathcal{S})}+ \sigma Z^k) \nonumber \\
&=   \lim_{\sigma \to \infty}  2 \sigma^2 I(X_{(\mathcal{S})}; X_{(\mathcal{S})}+ \sigma Z^k) \nonumber \\
&=   \E\left[ \| X_{(\mathcal{S})} -\E[X_{(\mathcal{S})} ]  |^2 \right],  
\end{align*} 
where the last limit is a standard result and can for example be found in~\cite[Corollary~2]{prelov2004second}.  %
\end{proof} 
By leveraging 
Theorem~\ref{thm:RepresentationsCond}, 
we can now construct procedures that answer the three questions raised in Section~\ref{sec:Intro}.   
Specifically, given $m \in [3]$, we propose the following  three approaches: 

\smallskip
\noindent
\textbf{(1) }  \emph{Marginal Approach}:  Generate one set of cardinality $k$  according to
  \begin{align}
\label{eq:Marg}
 \bar{\mathcal{S}}_{m}^M= \{ (i_1,\ldots, i_k): \,  & r_m( i_1,X^n ) \ge \ldots \ge   r_m( i_k,X^n ) ,  \nonumber  \\
 & 1 \le  i_1< \ldots <  i_k \le n    \}.
 \end{align}
This  approach generates an {\em ordered} set $\bar{\mathcal{S}}_{m}^M$
of indices of order statistics, listed from the (first) most informative to the $k$-th most informative, 
and  quantifies the amount of information that an individual order statistic contains about the sample.  

\smallskip
\noindent
\textbf{(2) } \emph{Joint Approach}: Generate one set of cardinality $k$ with 
\begin{equation}
\label{eq:Joint}
\bar{\mathcal{S}}_m^J \in \argmax_{\mathcal{S} \subseteq [n], \,  |\mathcal{S}|=k} \mathsf{r}_m(\mathcal{S},X^n).
\end{equation} 
Now $\bar{\mathcal{S}}_m^J$ contains the indices of the $k$ order statistics that are the most 
informative
about the sample.

\smallskip
\noindent
\textbf{(3) }\emph{Sequential Approach}: Generate one set of cardinality $k$ according to
\begin{align} 
\label{eq:Cond}
 \bar{\mathcal{S}}_{m}^S =\{   &(i_1,\ldots, i_k): \,  \nonumber  \\
 & r_m( i_t,X^n | \mathcal{V}_{t-1} )  \ge \max_{ j \in [n]: j \notin  \mathcal{V}_{t-1} }    r_m( j ,X^n | \mathcal{V}_{t-1} ), 
\nonumber \\
&  \mathcal{V}_{t}  = (i_1,\ldots, i_t),  t\in   [k] ,  \mathcal{V}_{0}   = \varnothing  \}. 
\end{align} 
This approach produces an ordered set, $\bar{\mathcal{S}}_{m}^S$,
of indices of order statistics where $i_t$ is the most 
informative order statistic given that the information of $t-1$ order statistics has already been incorporated (captured by the conditioning term).

In the next section we show that the sets  $ \bar{\mathcal{S}}_{m}^M, \bar{\mathcal{S}}_m^J$ and $ \bar{\mathcal{S}}_{m}^S$ may not be the same, even in simple cases. Thus, 
 the application of interest and target analysis should guide the choice of which approach to use (i.e., which of the three questions raised in Section~\ref{sec:Intro}  is most relevant for the problem at hand).

\section{Evaluation of the Informativeness Measures}
\subsection{Discrete Random Variables: The Bernoulli Case} 
We assess the three measures in Theorem~\ref{thm:RepresentationsCond} for the case of a sample of discrete r.v.\ in Lemma~\ref{lemma:Binary} (proof in Appendix~\ref{app:BernEx}). In particular, Lemma~\ref{lemma:Binary}  studies the Bernoulli case, and in Section~\ref{sec:Appl} we consider another discrete distribution with applications to  image processing.
The results presented here rely heavily on Lemma~\ref{lem:OS_dist} in Appendix~\ref{app:Sec:ProofOfJointPMFD} to compute the joint distribution of $k$ order statistics. 
\begin{lem}   \label{lemma:Binary}
Let $X^n$ be sampled as i.i.d.\ Bernoulli with success probability $p$. 
Let $B$ be a Binomial$(n, 1-p)$ r.v.\ and $B'$ be a Binomial$(n-1, 1-p)$ r.v.
Then,
\begin{align}
\mathsf{r}_1(i,X^n)&=h_b(P(B <  i)), \label{eq:r1Bern}\\
\mathsf{r}_2(i,X^n)&= \frac{np^2}{P(B<i)}\Big[ P(B' < i)\Big]^2 \notag \\
&\qquad +  \frac{np^2}{P(B \ge i)} \Big[P(B' \geq i) \Big]^2 - np^2, \label{eq:r2Bern} \\
 \mathsf{r}_3(i,X^n) & =P(B <  i)P(B \geq  i)   \label{eq:r3Bern},
\end{align} 
where $h_b(t) := -t \log(t) -(1-t)\log(1-t)$ is the binary entropy function.  
\end{lem}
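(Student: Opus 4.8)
The plan is to first describe the order statistics of a Bernoulli sample explicitly, and then read off each measure from Theorem~\ref{thm:RepresentationsCond}. Since every $X_j \in \{0,1\}$, the sorted sample consists of a block of zeros followed by a block of ones; letting $B = \sum_{j=1}^n (1 - X_j)$, which is Binomial$(n, 1-p)$, count the zeros, the $i$-th smallest value is $0$ exactly when there are at least $i$ zeros. Thus $X_{(i)} = 1$ if and only if $B < i$, so $X_{(i)}$ is a Bernoulli r.v.\ with $P(X_{(i)} = 1) = P(B < i)$. This single observation drives the whole proof.

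Given this, $\mathsf{r}_1$ and $\mathsf{r}_3$ are immediate. By Theorem~\ref{thm:RepresentationsCond}, $\mathsf{r}_1(i, X^n) = H(X_{(i)})$, and the entropy of a Bernoulli$(P(B<i))$ r.v.\ is precisely $h_b(P(B<i))$, giving~\eqref{eq:r1Bern}. Likewise $\mathsf{r}_3(i, X^n) = \mathrm{Var}(X_{(i)}) = P(B<i)\,P(B \ge i)$, giving~\eqref{eq:r3Bern}.

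The work is in $\mathsf{r}_2$. I would start from $\mathsf{r}_2(i,X^n) = \E[\, \| \E[X^n] - \E[X^n \mid X_{(i)}] \|^2 \,]$ and exploit exchangeability: because $X_{(i)}$ is a symmetric function of the sample, $\E[X_j \mid X_{(i)}]$ is the same for every $j$, say $g(X_{(i)})$. Then $\E[X^n \mid X_{(i)}] = g(X_{(i)})\,\mathbf{1}$ and $\E[X^n] = p\,\mathbf{1}$, so the squared norm collapses to $n\,(p - g(X_{(i)}))^2$. Since $\E[g(X_{(i)})] = \E[X_1] = p$ by the tower rule, this yields the compact form $\mathsf{r}_2(i, X^n) = n\,\mathrm{Var}(g(X_{(i)})) = n\big[\, g(1)^2\, P(B<i) + g(0)^2\, P(B \ge i) - p^2 \,\big]$.

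The last step, which I expect to be the main obstacle, is to evaluate the two values $g(0)$ and $g(1)$ and to see where $B'$ enters. The key identity is $g(1)\,P(B<i) = \E[X_1\, \mathbf{1}\{X_{(i)} = 1\}] = P(X_1 = 1,\, B < i)$; conditioning on $X_1 = 1$, the zeros among $X_2, \ldots, X_n$ number exactly $B' \sim \text{Binomial}(n-1, 1-p)$, so this equals $p\,P(B' < i)$. The analogous computation gives $g(0)\,P(B \ge i) = p\,P(B' \ge i)$. Substituting $g(1)^2\, P(B<i) = p^2 [P(B'<i)]^2 / P(B<i)$ and the corresponding term for $g(0)$ into the compact form produces exactly~\eqref{eq:r2Bern}. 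The only subtlety to watch is this clean reduction of the conditional mean to the $(n-1)$-sample Binomial $B'$, which is what converts the order-statistic quantity into the stated closed form.
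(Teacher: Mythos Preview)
Your proposal is correct and follows essentially the same approach as the paper: both identify $X_{(i)}$ as Bernoulli with parameter $P(B<i)$, read off $\mathsf{r}_1$ and $\mathsf{r}_3$ immediately, and compute $\mathsf{r}_2$ by evaluating $\E[X_j\mid X_{(i)}=c]$ for $c\in\{0,1\}$ via the Binomial$(n-1,1-p)$ count among the remaining $n-1$ samples. Your framing is slightly more streamlined---invoking exchangeability explicitly and writing $\mathsf{r}_2 = n\,\mathrm{Var}(g(X_{(i)}))$---whereas the paper expands the sum over $j$ and applies Bayes' rule, but the computations are equivalent.
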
 

\begin{remark}
\label{rem:BernEx}
Consider $x(1-x)$ for $x \in (0,1)$, which is symmetric and convex with the maximum occurring at $x=1/2$.  
Thus, $\mathsf{r}_3(i,X^n)$ in~\eqref{eq:r3Bern} is maximized by the $i$ such that $P(B \ge i)$ or  $1 - P(B \ge i)$ is as close to $1/2$ as possible. Hence, the maximizer $i$  is a median of $B$, namely,
\begin{equation}
\begin{split}
&i^\star_3(X^n)  = \argmax_{i \in [n]} \mathsf{r}_3(i,X^n)  \\
&=  \argmin_{i \in \{\lfloor  n (1-p) \rfloor, \lceil n (1-p) \rceil\}} \Big\{\Big|P(B \ge  i) - \frac{1}{2}\Big|\Big\}. \label{eq:istar3_bernoulli}
\end{split}
\end{equation}  
Moreover, since the binary entropy function $h_b(t)$ is increasing on $0 < t \le 1/2$ and decreasing on $1/2 \leq t < 1$, the maximizer for $\mathsf{r}_1(i,X^n)$ in~\eqref{eq:r1Bern} will also be given by \eqref{eq:istar3_bernoulli}.
\end{remark} 
When $X^n$ is sampled i.i.d.\ Bernoulli with  probability $p$, the `information' in the order statistics $0 \leq X_{(1)} \leq X_{(2)} \leq \ldots \leq X_{(n)} \leq 1$ is simply the counts of $0$'s and $1$'s present in the data. In terms of the order statistics, the `information' lies in the location of the switch point (if there is one), i.e., the $i$ where $X_{(i)} = 0$ but $X_{(i+1)} = 1$.  Since we expect $\mathbb{E}[X^n] = np$ of the samples to take the value $1$,  the switch point is expected to occur at $\text{round}(n(1-p))$, and Remark~\ref{rem:BernEx} (at least for $\mathsf{r}_1(\cdot,\cdot)$ and $\mathsf{r}_3(\cdot,\cdot)$) tells us that the `most informative' order statistic is where we expect the switch point to occur.    
In the next proposition, 
we  further show that, as the sample size grows, the most informative order statistic significantly dominates the  other statistics for measures~\eqref{eq:r1Bern} and~\eqref{eq:r3Bern}.
\begin{prop} 
\label{prop:LimLargeN}
Let $X^n$ be  i.i.d.\ Bernoulli with success probability $p \in (0,1)$.  For  any $c\in (0,1)$ independent of $n$, we~obtain
\begin{align*}
\lim_{n \to \infty}   \mathsf{r}_1( \lfloor c n \rfloor, X^n) &=  \left \{ \begin{array}{ll}    \log(2),  &  c=(1-p) ,\\
0 , & \text{otherwise} ,
\end{array}  \right. \\
\lim_{n \to \infty}   \mathsf{r}_3( \lfloor c n \rfloor, X^n)&=  \left \{ \begin{array}{ll}   1/4,  &  c=(1-p) ,\\
0 , & \text{otherwise} .
\end{array}  \right. 
\end{align*}
The same result holds  when $\lfloor \cdot \rfloor$ is replaced by $\lceil \cdot \rceil$.
\end{prop}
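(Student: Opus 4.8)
The plan is to observe that both measures depend on $n$, $c$, and $p$ only through the single scalar $q_n := P(B < \lfloor c n\rfloor)$, where $B \sim \mathrm{Binomial}(n,1-p)$. Indeed, by~\eqref{eq:r1Bern} we have $\mathsf{r}_1(\lfloor cn\rfloor, X^n) = h_b(q_n)$, and by~\eqref{eq:r3Bern} we have $\mathsf{r}_3(\lfloor cn\rfloor, X^n) = q_n(1-q_n)$. Since $h_b$ and $x \mapsto x(1-x)$ are continuous on $[0,1]$, with $h_b(0)=h_b(1)=0$, $h_b(1/2)=\log 2$, and $0\cdot 1 = 1\cdot 0 = 0$, $\tfrac12\cdot\tfrac12=\tfrac14$, the entire proposition reduces to showing
\[ \lim_{n\to\infty} q_n = \begin{cases} 0, & c < 1-p, \\ 1/2, & c = 1-p, \\ 1, & c > 1-p, \end{cases} \]
and then invoking continuity of these two maps. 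After this reduction, the only real content is the asymptotics of a binomial tail evaluated at a threshold that scales linearly with $n$.

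For $c \neq 1-p$, I would control $q_n$ by the weak law of large numbers (Chebyshev suffices, since $B$ has mean $n(1-p)$ and variance $np(1-p)$, so $B/n \to 1-p$ in probability). When $c < 1-p$, the event $\{B < \lfloor cn\rfloor\}$ is contained in $\{B/n \le c\}$ because $\lfloor cn\rfloor \le cn$, and $P(B/n \le c)\to 0$ since $c < 1-p$; hence $q_n \to 0$. When $c > 1-p$, the complementary event $\{B \ge \lfloor cn\rfloor\}$ is contained in $\{B/n > c - 1/n\}$ because $\lfloor cn\rfloor > cn-1$, and for large $n$ the threshold $c-1/n$ exceeds $1-p$ by a fixed positive margin, so this probability tends to $0$ and $q_n \to 1$. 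In both subcases $\mathsf{r}_1 \to 0$ and $\mathsf{r}_3 \to 0$, as claimed.

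The borderline case $c = 1-p$, where the threshold $\lfloor(1-p)n\rfloor$ sits essentially at the mean of $B$, is the delicate one and the main obstacle, since the law of large numbers is too crude to resolve the value $1/2$. Here I would invoke the central limit theorem. Writing $\mu = n(1-p)$ and $\sigma = \sqrt{np(1-p)}$, the standardized threshold is $(\lfloor(1-p)n\rfloor - \mu)/\sigma$; since $\lfloor(1-p)n\rfloor - \mu \in (-1,0]$ is bounded while $\sigma \to \infty$, this ratio tends to $0$. Because $(B-\mu)/\sigma$ converges in distribution to a standard normal, and the standard normal CDF $\Phi$ is continuous at $0$, one obtains $q_n = P\big((B-\mu)/\sigma < (\lfloor(1-p)n\rfloor-\mu)/\sigma\big) \to \Phi(0) = 1/2$. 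Consequently $\mathsf{r}_1 \to h_b(1/2) = \log 2$ and $\mathsf{r}_3 \to (1/2)(1/2) = 1/4$. The one point that warrants care is that $B$ is lattice-valued, so strictly one should check that the discreteness does not open a gap at the median; this is controlled by the fact that the threshold offset is bounded while $\sigma\to\infty$, so the half-integer correction is asymptotically negligible.

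Finally, replacing $\lfloor cn\rfloor$ by $\lceil cn\rceil$ shifts the threshold by at most $1$, which perturbs the standardized threshold by $O(1/\sigma) = o(1)$ and leaves every limit above unchanged; hence the same conclusions hold verbatim. Overall, the hard part is only the $c = 1-p$ CLT step, while the remaining cases follow immediately from concentration and the continuity reduction.
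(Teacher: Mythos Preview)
Your proposal is correct and follows essentially the same approach as the paper: reduce both measures to continuous functions of $q_n = P(B < \lfloor cn\rfloor)$ and compute $\lim_{n\to\infty} q_n$ via the normal approximation. The only cosmetic difference is that the paper invokes the de~Moivre--Laplace theorem uniformly across all three cases (letting the standardized threshold run to $\pm\infty$ when $c\neq 1-p$), whereas you split off the $c\neq 1-p$ cases and handle them with the weak law of large numbers; both routes are valid and lead to the same conclusion.
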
 
\begin{proof}
From de Moivre-Laplace theorem \cite{feller2008introduction}, we know that for $B \sim \text{Binomial}(n, 1-p)$ the distribution of $ \frac{B-n(1-p)}{ \sqrt{n p(1-p)}}$ converges to the standard normal distribution. Hence,
\begin{align*}
& \lim_{n \to \infty} P( B < \lfloor c n \rfloor  ) \notag \\
 &=   \lim_{n \to \infty} P \left(  \frac{B-n(1-p)}{ \sqrt{n p(1-p)}} <  \frac{\lfloor c n \rfloor - n(1-p) }{ \sqrt{n p(1-p)}}  \right ) \notag\\
 &=   \lim_{n \to \infty} \Phi \left( \frac{\lfloor c n \rfloor - n(1-p) }{ \sqrt{n p(1-p)}}  \right ) =  \left \{\begin{array}{cc} 
 1,  &  c>(1-p),\\ 
 1/2, & c=(1-p),\\
 0 , &  c<(1-p),\\ 
  \end{array} \right. 
\end{align*} 
where  $\Phi(\cdot)$ is the cumulative distribution function of the standard normal. 
Inserting the limit  above into the expressions for $\mathsf{r}_1(\cdot,\cdot)$ and $\mathsf{r}_3(\cdot,\cdot)$ in~\eqref{eq:r1Bern} and~\eqref{eq:r3Bern} completes the proof.
\end{proof}
In the above, we focused our analysis on
the \emph{single} most informative order statistic.
We now want to consider \emph{sets} 
$\bar{\mathcal{S}}_{m}^M$, $\bar{\mathcal{S}}_m^J$ and $ \bar{\mathcal{S}}_{m}^S$ defined in~\eqref{eq:Marg}--\eqref{eq:Cond}.
For simplicity, we consider measure $\mathsf{r}_1(\cdot,\cdot)$ and an i.i.d.\ Bernoulli sample of size $n=19$ with $p=0.5$.  
Then  for set sizes $k \in [4]$  
we find
\begin{align*}
&\bar{\mathcal{S}}_{1}^M \rightarrow  \{10\}, \{10, 9 \},  \{10, 9, 11 \}, \{10, 9, 11, 8 \}; \\
&  \bar{\mathcal{S}}_{1}^J \rightarrow  \{10\}, \{9, 11 \},  \{10, 8, 12 \}, \{10,8,12,9\};\\
 &\bar{\mathcal{S}}_{1}^S  \rightarrow \{10\},  \{10,8\}, \{10,8,12\}, \{10,8,12,9\}.
\end{align*} 
Notice that the three sets can  all be different (e.g., when $k=2$) and we find that this difference 
becomes more drastic when any of the following occurs: $n$ increases, the size of the r.v.\ support increases, or the distribution becomes more asymmetric. 
To interpret the above, consider only the $k=2$ collection. From $\bar{\mathcal{S}}_{1}^M$, we know that the $10^{th}$ statistic is the most informative and the $9^{th}$ is the second most. However, the pair of most informative statistics is the $9^{th}$ and $11^{th}$ by $\bar{\mathcal{S}}_{1}^J$. From $\bar{\mathcal{S}}_{1}^S$, we know that, given the most informative (the $10^{th}$), the $8^{th}$  provides the most additional information.

\subsection{Continuous Random Variables: The Uniform Case} 

Now we look at an example for a sample of continuous random variables in Lemma~\ref{example:uniformContinious} (proof in Appendix~\ref{app:UnifExamp}).  
Remember that, from Theorem~\ref{thm:RepresentationsCond}, we have that the metric $\mathsf{r}_1(\cdot,\cdot)$ is infinity for continuous r.v., and hence we here focus on $\mathsf{r}_2(\cdot,\cdot)$ and $\mathsf{r}_3(\cdot,\cdot)$.
In particular, Lemma~\ref{example:uniformContinious} studies a Uniform sample, and in Section~\ref{sec:Appl} we consider another continuous distribution with applications to  image processing.
Throughout this section we use Lemma~\ref{lem:OS_cont}, in Appendix~\ref{app:Sec:ProofOfJointPMFC}, to compute the joint distribution of $k$ order statistics.

\begin{lem}\label{example:uniformContinious} Let $X^n$ be sampled as i.i.d.\ $\mathcal{U}(0,a)$ for $a>0$, i.e., sampled i.i.d.\ uniform on the interval $(0,a)$  and, for $k \in \{2,3\}$ define
$i^\star_k(X^n) = \argmax_{i \in [n]} \mathsf{r}_k(i,X^n).$
Then,
\begin{subequations}
\label{eq:ExContUn}
\begin{equation}
  \mathsf{r}_2(i,X^n)=\frac{ a^2 i (n+1-i)}{4n(n+2)},
\end{equation}
%and
\begin{equation}
   \mathsf{r}_3(i,X^n)=  \frac{ a^2  i (n+1-i)}{ (n+1)^2 (n+2)}, \qquad \text{and}
\end{equation}
%and
\begin{equation}
i^\star_2(X^n) = i^\star_3(X^n) \in  \left\{ \left \lceil \frac{n+1}{2} \right\rceil  ,\left \lfloor \frac{n+1}{2} \right\rfloor   \right  \}. \label{eq:i3starUn}
\end{equation}
\end{subequations}
\end{lem}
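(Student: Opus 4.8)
The plan is to treat $\mathsf{r}_3$ and $\mathsf{r}_2$ separately, exploiting that here $|\mathcal{S}|=1$ so that $X_{(\mathcal{S})}=X_{(i)}$ is a scalar order statistic, and then to read off the common maximizer from the resulting closed forms.

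For $\mathsf{r}_3(i,X^n)$, Theorem~\ref{thm:RepresentationsCond} gives $\mathsf{r}_3(i,X^n)=\E[\,\|X_{(i)}-\E[X_{(i)}]\|^2\,]=\mathrm{Var}(X_{(i)})$. I would use the standard fact (obtainable from Lemma~\ref{lem:OS_cont}) that for an i.i.d.\ $\mathcal{U}(0,a)$ sample the normalized order statistic $X_{(i)}/a$ is $\mathrm{Beta}(i,n+1-i)$ distributed, whence $\E[X_{(i)}]=a\,i/(n+1)$ and $\mathrm{Var}(X_{(i)})=a^2 i(n+1-i)/[(n+1)^2(n+2)]$. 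This delivers the claimed expression for $\mathsf{r}_3(i,X^n)$ at once.

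For $\mathsf{r}_2(i,X^n)=\E[\,\|\E[X^n]-\E[X^n|X_{(i)}]\|^2\,]$, the key is to evaluate the conditional mean $\E[X^n|X_{(i)}]$. By exchangeability of the i.i.d.\ sample, $\E[X_j|X_{(i)}]=m(X_{(i)})$ is the same scalar function of $X_{(i)}$ for every $j$, so $\E[X^n|X_{(i)}]$ is the constant vector $m(X_{(i)})\mathbf{1}$ and, since $\E[X^n]=(a/2)\mathbf{1}$, we get $\mathsf{r}_2(i,X^n)=n\,\mathrm{Var}(m(X_{(i)}))$. To find $m(t):=\E[X_j|X_{(i)}=t]$ I would decompose over the role of a sample point relative to the threshold $t$: conditioned on $X_{(i)}=t$, a given point is the $i$-th order statistic w.p.\ $1/n$ (contributing $t$), lies among the $i-1$ points below $t$ w.p.\ $(i-1)/n$ (each then conditionally $\mathcal{U}(0,t)$, mean $t/2$), or among the $n-i$ points above $t$ w.p.\ $(n-i)/n$ (each then conditionally $\mathcal{U}(t,a)$, mean $(t+a)/2$). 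Combining these yields the affine form $m(t)=[(n+1)t+(n-i)a]/(2n)$, so $\mathrm{Var}(m(X_{(i)}))=[(n+1)/(2n)]^2\,\mathrm{Var}(X_{(i)})$. Substituting the variance from the previous step and simplifying gives $\mathsf{r}_2(i,X^n)=a^2 i(n+1-i)/[4n(n+2)]$.

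Finally, for the maximizer, both $\mathsf{r}_2(i,X^n)$ and $\mathsf{r}_3(i,X^n)$ are, up to positive constants independent of $i$, equal to $i(n+1-i)$, a concave quadratic in $i$ whose real maximizer is $i=(n+1)/2$. Since $i$ ranges over integers in $[n]$, the maximum is attained at the nearest integer(s), i.e.\ at $\lfloor (n+1)/2\rfloor$ and $\lceil (n+1)/2\rceil$, which coincide when $n$ is odd and give equal values by the symmetry $i\mapsto n+1-i$ when $n$ is even; this establishes~\eqref{eq:i3starUn}. The main obstacle is the correct evaluation of $m(t)=\E[X_j|X_{(i)}=t]$: one must argue carefully, via exchangeability and the conditional uniformity of the sub- and super-threshold points, that the conditional sample mean is affine in $t$. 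Everything else is a short computation or the elementary concavity argument.
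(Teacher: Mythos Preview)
Your proposal is correct and follows essentially the same route as the paper: both use the $\mathrm{Beta}(i,n+1-i)$ law for $X_{(i)}/a$ to get $\mathsf{r}_3$, and both compute $\E[X_j\mid X_{(i)}=t]$ via the three-case decomposition (equal, below, above) with probabilities $1/n$, $(i-1)/n$, $(n-i)/n$ to obtain the affine form $m(t)=[(n+1)t+(n-i)a]/(2n)$. Your finishing step for $\mathsf{r}_2$ is slightly cleaner---you recognize $\mathsf{r}_2=n\,\mathrm{Var}(m(X_{(i)}))=n\,[(n+1)/(2n)]^2\,\mathrm{Var}(X_{(i)})$ directly from the affine form---whereas the paper expands the square and uses $\E[X_{(i)}^2]$ explicitly; both lead to the same closed form and the same concave-quadratic argument for the maximizer.
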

\begin{remark}
Lemma~\ref{example:uniformContinious} also encompasses the case where $X^n$ is sampled as i.i.d.\ $\mathcal{U}(a,b)$ for general $ a < b$ since the mutual information, which characterizes $\mathsf{r}_2(i,X^n)$ and $\mathsf{r}_3(i,X^n)$ (see Definition~\ref{def:OrderstatMI}), has the property that $I(X+c;Y) = I(X;Y)$ when $c$ is some constant.
\end{remark}
\begin{remark}
For $c\in(0,1)$ independent of $n$, metrics  $\mathsf{r}_2 (\cdot, \cdot)$ and $\mathsf{r}_3(\cdot, \cdot)$ have the following  behaviors as $n$ goes to infinity:
\begin{align*}
\lim_{n \to \infty} \mathsf{r}_2( \lfloor cn \rfloor, X^n)&=  {a^2c(1-c)}/{4}, \\
\lim_{n \to \infty}  n \cdot \mathsf{r}_3( \lfloor cn \rfloor, X^n)&= a^2 c(1-c).  
\end{align*} 
\end{remark}
We conclude this section by again considering the sets of most informative order statistics 
$\bar{\mathcal{S}}_{m}^M$, $\bar{\mathcal{S}}_m^J$ and $ \bar{\mathcal{S}}_{m}^S$ in~\eqref{eq:Marg}--\eqref{eq:Cond}. Specifically, for an i.i.d.\ sample uniform on $(0,a)$ with $a=1$ and $n=5$, the sets of sizes $k \in [4]$ are given by 
\begin{align*}
&\bar{\mathcal{S}}_{3}^M \rightarrow   \{3 \}, \{3,2\}, \{3,2,4\}, \{ 3,2,4,1\}; \\
& \bar{\mathcal{S}}_{3}^J \rightarrow    \{ 3 \}, \{3,2\}  , \{3,2,4\},\{ 3,2,4,1\} ;\\
&\bar{\mathcal{S}}_{3}^S  \rightarrow  \{3 \}, \{3,5\},  \{3,5,1 \},  \{   3,     5,     1,     4 \}.
\end{align*} 
Similarly to the discrete case, we see that it is possible for the  approaches to result in different sets.
To interpret the above, consider only the $k=2$ collection. From $\bar{\mathcal{S}}_{3}^M$, we know that the $3^{rd}$ statistic (the median) is the most informative and the $2^{nd}$ is the second most. By $\bar{\mathcal{S}}_{3}^J$, the same order statistics form the most informative pair. However, from $\bar{\mathcal{S}}_{3}^S$, we know that given the most informative (the $3^{rd}$), the $5^{th}$  provides the most additional information.

\section{Applications} 
\label{sec:Appl}

In this section, we show how the informativeness framework for order statistics just developed can be used in image processing applications.
We begin by reviewing some of the details about order statistics filters, which represent a class of non-linear filters.

\subsection{Order Statistics Filtering} 
Consider the following discrete-time filter, referred to as an {\bf L-estimator} in the remainder of the~paper.
\begin{defin} \label{def:Lestimator}
Define a filter
\begin{equation}
\label{eq:L-filter} 
Y_t =  \sum_{k=1}^n \alpha_k  X_{(k)},  \quad t  \in \mathbb{Z},
\end{equation}
where: (i) $X_{(k)}$, for $k \in [n],$ is the $k$-th order statistic of an  i.i.d.\ sequence $X_{t+i-1},$ for $i \in [n+1]$; 
(ii) $n$ is the filtering window width; and (iii) $\alpha_k \geq 0$'s, for $k \in [n],$ are the coefficients of the filter
such that $\sum_{k=1}^n \alpha_k=1$.
This filter is known as an \emph{L-estimator} in robust statistics~\cite{huber2004robust} and as an \emph{order statistics filter} in  image processing~\cite{pitas1992order, bovik1983generalization}. 
\end{defin}

 The general form of the L-estimator encompasses a large number of linear and non-linear filters. Examples are: 
\begin{enumerate}
\item  \emph{moving-average filter:} $\alpha_k=1/n,$ for all $k \in [n]$;  
\item \emph{median filter:} (by considering odd values of $n$) $\alpha_k=1$ for $k= (n+1)/2$ and $\alpha_k=0$ for $k\neq (n+1)/2$;  
\item \emph{maximum filter:} $\alpha_n=1$ and $\alpha_k=0$ for $k\neq n$; 
\item \emph{minimum filter:} $\alpha_1=1$ and $\alpha_k=0$ for $k\neq 1$;  
\item \emph{midpoint filter:} $\alpha_1 = \alpha_n =1/2$ and $\alpha_k=0$ for $k\neq 1,n$;
\item  \emph{$r$-th ranked-order filter:} $\alpha_r=1$ and $\alpha_k=0$ for $k\neq r$. 
\end{enumerate}
The L-estimator in Definition~\ref{def:Lestimator} has been extensively studied in the literature~\cite{viswanathan199823,yang2011order,HBS17,chernoff1967asymptotic,hosking19988,lloyd1952least,blom1962nearly,tukey1974nonlinear}. A comprehensive survey of their applications and, more generally, of order statistics is given in~\cite{HBS17}.
It is important to highlight that the L-estimator in~\eqref{eq:L-filter} forms a restricted class of estimators, and, as such, it is possible that other estimators, like the maximum likelihood, may have  better efficiency.  Nonetheless, it was shown in~\cite{chernoff1967asymptotic} that for a certain choice of weights, the estimator in~\eqref{eq:L-filter} attains the Cram\'er-Rao bound asymptotically and, hence, is asymptotically  efficient.  For an excellent survey on L-estimators, the interested reader is referred to~\cite{hosking19988}.  

The optimal choice of the coefficients in~\eqref{eq:L-filter} has received considerable attention in the context of scale-and-shift models.  Specifically, suppose that  the $X_i$'s are generated i.i.d.\ according to a cumulative distribution function, $F( \frac{x-\lambda}{\sigma})$,  where the location parameter, $\lambda$, and the scaling parameter, $\sigma$, are
unknown. The best unbiased estimator of $(\lambda, \sigma)$  under the mean squared error (MSE) criterion was found in~\cite{lloyd1952least}. This approach, however, requires  computation and inversion of covariance matrices of order statistics and is often prohibitive.  To overcome this, the authors of~\cite{blom1962nearly} proposed a choice of coefficients resulting in an approximately minimum variance, while depending only on $F(\cdot)$ and the probability density function (pdf), and only requiring inversion of a $2 \times 2$ matrix.

Our interest in this work lies in applications of order statistics  to image processing, where 
the median filter is the most popular choice~\cite{tukey1974nonlinear}. 
The work in~\cite{bovik1983generalization} also applies the L-estimator to image processing in a setting where the image is assumed to be corrupted by additive noise and the optimal MSE estimator of~\cite{lloyd1952least} was used. 
A comprehensive survey of applications of order statistics to digital image processing can be found in~\cite{pitas1992order}.    
 
For image processing, using a parametric scale-and-shift model might be too simplistic as it only models additive noise and a variety of widely-used image processing noise models, such as salt and pepper or speckle noise, cannot be modeled as additive.  
Moreover, the majority of the distortions encountered in practice are  discrete in nature, and hence one needs to work with discrete, instead of continuous, order statistics. 
Another issue that arises with the aforementioned approaches to choosing the optimal coefficients in~\eqref{eq:L-filter} is the use of the MSE as the fidelity criterion.  
Indeed, it turns out that the MSE is not a good approximation of the human perception of image fidelity~\cite{wang2009mean,pappas2000perceptual}.  
Thus, coefficients that are optimal for the MSE might not be the best choice if the goal is to optimize the human perceptual criterion for image quality.

We will use the measures in Section~\ref{sec:CharInfoMeas} to choose the L-estimator coefficients. This approach benefits from the fact that it can be applied to both continuous and discrete models. Moreover,
reliance on the MSE 
can be avoided, and signal fidelity can instead be measured using alternative 
quantities like the entropy.  
Our goal is to show that  selecting the L-estimator coefficients
using the most informative order statistics is a viable and competitive approach, worth further exploration. 
We compare the performance of the proposed L-estimator to that of several state-of-the-art denoising methods such as the total variation filter ~\cite{rudin1992nonlinear},
 and three different implementations of the wavelet-based filters  namely empirical Bayes \cite{johnstone2004needles}, Stein's Unbiased Estimate of Risk (SURE) \cite{donoho1994ideal} and   False Discovery Rate (FDR) \cite{pizurica2006review}.  

 Our simulations use the image in Fig.~\ref{fig:OrFig}, which has
  \begin{figure}
\centering
\fbox{\includegraphics[width=6cm]{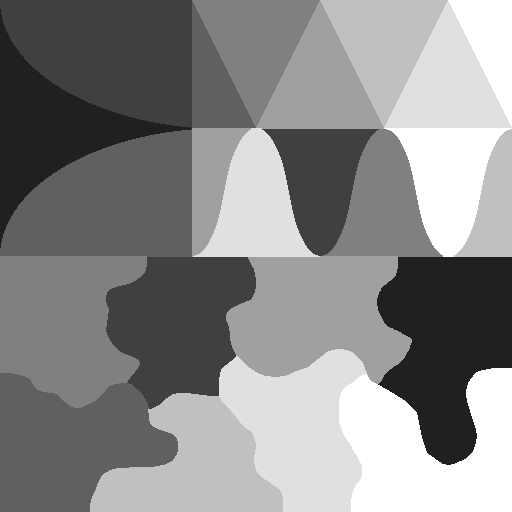}}
\caption{Test image.}
\label{fig:OrFig}
\end{figure} 
$N=(512)^2$ pixels.
As there is no universally-used performance metric for image reconstruction,  we consider several well-known ones: 
(i) the MSE normalized by $N$;
(ii) the peak signal-to-noise ratio (PSNR), measured in dB;
(iii) the structural similarity (SSIM) index~\cite{wang2004image}, taking values between $0$ and $1$ where $1$ is  perfect reconstruction;
and (iv) the image quality index IQI~\cite{wang2002universal}, taking values between $-1$ and $1$ where $1$ is perfect reconstruction.

\subsection{Image Denoising in Salt and Pepper Noise} \label{sec:SPNOISE}
We analyze gray scale image denoising where  pixels are typically $8$-bit data values ranging from $0$ (black) to $255$ (white). We use an observation model where an unknown pixel $x \in [0:255]$ is corrupted by the salt and pepper noise. Let  $ P(x \text{ is corrupted by pepper noise}| x \text{ is noisy} )= \rho_1$ and $P(x \text{ is noisy})=\rho$. We model the noisy observation $X$ with a probability mass function (pmf):
\begin{subequations}
\label{eq:NoiseModel}
\begin{align}
&P(x \text{ corrupted by pepper noise})\!=\!P(X\!=\! 0)\!=\!\!  \rho_1  \rho,\\
&P(x \text{ noise-free})= P(X= x)=  1- \rho,\\
&P(x \text{ corrupted by salt noise})\!=\!P(X\!=\! 255)\!=\!  (1\!-\!\rho_1) \rho.
\end{align} 
\end{subequations}
In the above, $\rho$  corresponds to the percentage of  pixels corrupted by noise, and $\rho_1$ is the percentage of pixels  corrupted by pepper noise.  

The pseudocode in Algorithm~\ref{algo:Deno} summarizes our general image denoising algorithm based on the L-estimator. 
In particular, we use a square-shaped  window of size $w \times w$ to sample the pixels of an image.
Moreover, if $\rho_1$ and $\rho$ are unknown, their estimates can be computed as 
\begin{equation}
\begin{split}
\hat{\rho}_1 &=\frac{  {\sum_{t=1}^N} \mathsf{1}_{\{X_t=0\} }   }{  {\sum_{t=1}^N} (\mathsf{1}_{\{X_t=0\}}   + \mathsf{1}_{\{X_t=255\} }  )}\\ 
\hat{\rho} &=   \frac{1}{N} {\sum_{t=1}^N} (\mathsf{1}_{\{X_t=0 \}}   + \mathsf{1}_{\{X_t=255 }\}  ),  \label{eq:EstProb}
\end{split}
\end{equation}
where $\mathsf{1}_{\{ \cdot \}}$ is the indicator function, and $N$ is the  number of pixels in the image.  
The estimators  
in~\eqref{eq:EstProb} perform well
if the original image contains very few pixel values exactly equal to $0$ and $255$, but since these are the extremes of possible pixel values, this is often reasonable to assume. 

Choosing $\mathsf{r}_1(\cdot,\cdot)$ as the performance metric offers several benefits.
First, the received  data $X^n$ for $n=w^2$ is discrete, and hence entropy is a natural choice for informativeness measure. 
Second,  the measures $\mathsf{r}_2(\cdot,\cdot)$ and $\mathsf{r}_3(\cdot,\cdot)$ depend on the values of the support of $X^n$. Thus, one would need to specify the value of the unknown parameter $x$ in~\eqref{eq:NoiseModel}. 
In contrast, the measure $\mathsf{r}_1(\cdot,\cdot)$ does not depend on the support values but only on the relative positions of the support points. Hence, the parameter $x$ can be left unspecified, and we only assume that it lies in the range $[0:255]$.

\begin{algorithm}
 \caption{Image denoising  based on the L-estimator.}
\label{algo:Deno}
  \begin{algorithmic}[1]
  	\renewcommand{\algorithmicrequire}{\textbf{Input:}}
  	\renewcommand{\algorithmicensure}{\textbf{Output:}}
  	\Require Image; Size $w$ of the square-shaped window; Probabilities $\rho_1$ and $\rho$ or their estimates in~\eqref{eq:EstProb}.
  	\Ensure Reconstructed image.
\State Set the length of the sequence $n=w^2$. Sample the square window of size $w \times w$ and collect the samples in a vector of length $n$. This constitutes the noisy sequence $X^{n} = \{X_i, \text{for } i \in [n]\}$. 
  	\State Compute $\mathsf{r}_1   (k,X^n   )$ in Theorem~\ref{thm:RepresentationsCond} for all $k \in [n]$ by using~\eqref{eq:NoiseModel}.
\State Compute the coefficients $\alpha_k$'s for all $k \in [n]$ for the L-estimator in Definition~\ref{def:Lestimator} as follows:
\begin{equation}
\begin{split}
&\text{If }  \rho < 0.5, \text{ assign } \alpha_k=  \frac{ \mathsf{r}_1^{-1} \left (k,X^{n} \right )}{  \sum_{i=1}^{n} \mathsf{r}_1^{-1}\left (i,X^{n} \right )  },  \\
& \text{ otherwise, assign } \alpha_k=  \frac{ \mathsf{r}_1(k,X^{n})}{  \sum_{i=1}^{n} \mathsf{r}_1(i,X^{n})  }. 
\label{eq:FilterCoefficientsSequentail} 
\end{split}
\end{equation} 
\State Apply the L-estimator in Definition~\ref{def:Lestimator} to the samples in Step~1 with the coefficients in~\eqref{eq:FilterCoefficientsSequentail}.
  \end{algorithmic}

\end{algorithm}

We now explain our choice of the coefficients in~\eqref{eq:FilterCoefficientsSequentail} for the low-noise regime, i.e., $\rho <0.5$, and for the high-noise regime, i.e., $\rho \geq 0.5$. 
We start by noting that in the ordered sample $X_{(1)},\ldots, X_{(n)}$ with $n=w^2$, \emph{approximately}: 
(i) the  first $\rho_1 \rho n$ samples are  corrupted by pepper noise;
(ii) the middle chunk of samples of length  $(1-\rho )n$ consists of noise-free pixels; 
and (iii) the last chunk of samples of length  $(1-\rho_1) \rho n$ consists of pixels  corrupted by salt noise.

\noindent {\bf Low-Noise Regime, $\rho <0.5$.} 
In this regime, the noise-free pixels are the most common or typical.   
Now, recall that the entropy can be interpreted as the average rate at which a stochastic source produces information, where typical events are assigned less weight than extreme probability events.
Hence, we expect  that $\mathsf{r}_1(i,X^n)$ is smaller for values of $i$ that fall in the middle chunk of samples 
(that consists of noise-free pixels) compared to values of $i$ corresponding to other samples.
Hence, in this regime, we choose the coefficients of the L-estimator to be inversely proportional to $\mathsf{r}_1(\cdot,\cdot)$ as shown in~\eqref{eq:FilterCoefficientsSequentail} for $\rho<0.5$, where
the normalization is needed to ensure that the estimator is unbiased.

As an example, we consider a low-noise regime with $\rho=0.3$ and $\rho_1=0.05$, where we expect that roughly $30 \%$ of the image is corrupted by noise and the noise is  mostly salt.  
In Fig.~\ref{fig:ExampleOfr1AndPMF}, we plot the measure $\mathsf{r}_1(i,X^n)$ for $i \in [n]$ and $n=16$ (i.e., $4 \times 4$ window). 
Observe that in this regime,
\begin{figure}
\center 
\begin{subfigure}[c]{0.45\textwidth}
	% This file was created by matlab2tikz.
%
%The latest updates can be retrieved from
%  http://www.mathworks.com/matlabcentral/fileexchange/22022-matlab2tikz-matlab2tikz
%where you can also make suggestions and rate matlab2tikz.
%
\definecolor{mycolor1}{rgb}{0.00000,0.44700,0.74100}%
\begin{tikzpicture}

\begin{axis}[%
width=6cm,
height=3.2cm,
at={(1.011in,0.642in)},
scale only axis,
xmin=0,
xmax=16,
xlabel style={font=\color{white!15!black}},
xlabel={$i$},
ymin=0,
ymax=1.01,
ylabel style={font=\color{white!15!black}},
ylabel={$\mathsf{r}_1(i,X^n)$},
axis background/.style={fill=white},
xmajorgrids,
ymajorgrids,
legend style={legend cell align=left, align=left, draw=white!15!black}
]
\addplot[ycomb, color=black, mark=o, mark options={solid, mycolor1}] table[row sep=crcr] {%
1	0.750560233063741\\
2	0.160578607478497\\
3	0.0175004364580821\\
4	0.00151449498635231\\
5	0.00224070667718164\\
6	0.0111832012702059\\
7	0.0435739415779589\\
8	0.131885620584077\\
9	0.313720195935091\\
10	0.589217806878933\\
11	0.870981880352116\\
12	0.999970726780352\\
13	0.867211892089199\\
14	0.538976926864477\\
15	0.216095992321314\\
16	0.0428435186696111\\
};
\addplot[forget plot, color=white!15!black] table[row sep=crcr] {%
0	0\\
16	0\\
};
%\addlegendentry{data1}

\end{axis}

%\begin{axis}[%
%width=7.778in,
%height=5.833in,
%at={(0in,0in)},
%scale only axis,
%xmin=0,
%xmax=1,
%ymin=0,
%ymax=1,
%axis line style={draw=none},
%ticks=none,
%axis x line*=bottom,
%axis y line*=left,
%legend style={legend cell align=left, align=left, draw=white!15!black}
%]
%\end{axis}
\end{tikzpicture}%
	\end{subfigure} 
	\\
	\begin{subfigure}[c]{0.45\textwidth}
	% This file was created by matlab2tikz.
%
%The latest updates can be retrieved from
%  http://www.mathworks.com/matlabcentral/fileexchange/22022-matlab2tikz-matlab2tikz
%where you can also make suggestions and rate matlab2tikz.
%
\definecolor{mycolor1}{rgb}{0.00000,0.44700,0.74100}%
\begin{tikzpicture}

\begin{axis}[%
width=6cm,
height=3.2cm,
at={(1.011in,0.642in)},
scale only axis,
xmin=-5,
xmax=260,
xlabel={$\text{Support of} \ X$},
label={pmf},
ymin=0,
ymax=0.81,
ylabel={pmf},
axis background/.style={fill=white},
title style={font=\bfseries},
xmajorgrids,
ymajorgrids,
legend style={legend cell align=left, align=left, draw=white!15!black}
]
\addplot[ycomb, color=black, mark=o, mark options={solid, mycolor1}] table[row sep=crcr] {%
0	0.015\\
150	0.7\\
255	0.285\\
};
\addplot[forget plot, color=white!15!black] table[row sep=crcr] {%
0	0\\
300	0\\
};
%\addlegendentry{data1}

\end{axis}

\end{tikzpicture}%%
	\end{subfigure}%
	\vspace{-1mm}
     	\caption{$\rho=0.3; \rho_1=0.05$. \textbf{Above:} $\mathsf{r}_1(i,X^n)$ for $i \in [n], n=16$; \textbf{Below:} pmf of $X$.}
	\label{fig:ExampleOfr1AndPMF}
	\vspace{-0.55cm}
\end{figure}
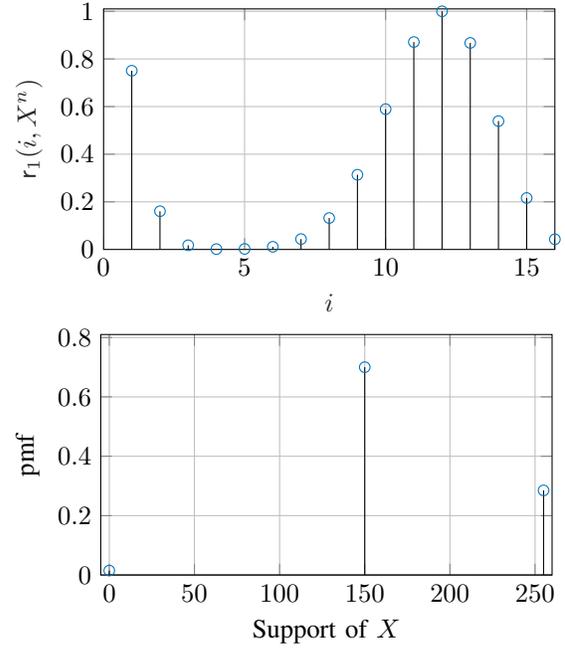
approximately $0.24$ samples are corrupted by pepper noise,
$4.56$ samples are corrupted by salt noise, and $11.2$ samples are noise-free.

  \begin{figure*}[t]
	\begin{subfigure}[t]{0.2\textwidth} \centering
	\fbox{\includegraphics[width=2cm]{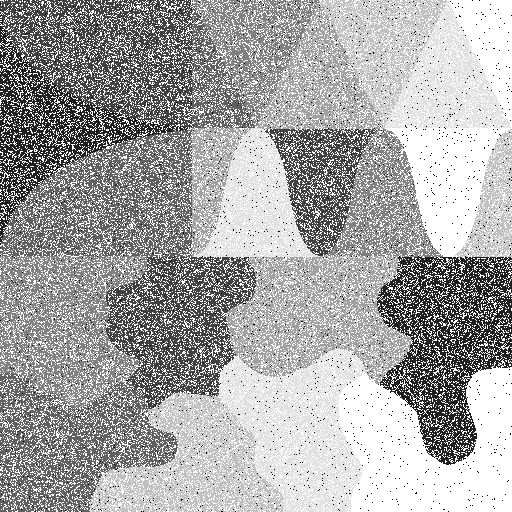}}%
	\caption{\tiny Noisy Image. \\ MSE=0.022, PSNR=10.510, \\
	SSIM=0.099, IQI=0.037.}
	\label{fig:NoisyImage}
	\end{subfigure}%
	\begin{subfigure}[t]{0.2\textwidth} \centering
	\fbox{\includegraphics[width=2cm]{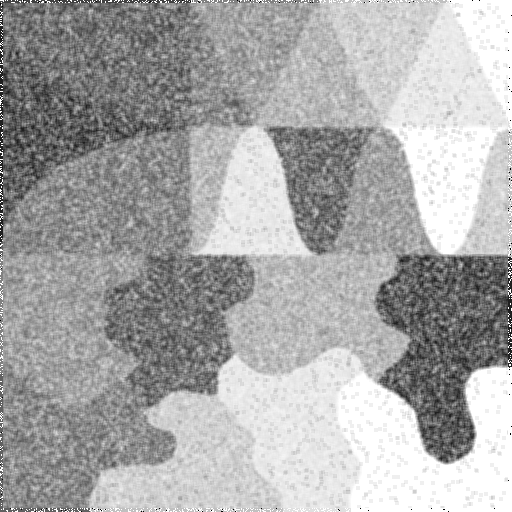}}%
	\caption{\tiny Average Filter. \\MSE=0.007, PSNR= 15.398, \\
	SSIM=0.366, IQI=0.062.}
	\label{fig:MeanFilter}
	\end{subfigure}%
	\begin{subfigure}[t]{0.2\textwidth} \centering
	\fbox{\includegraphics[width=2cm]{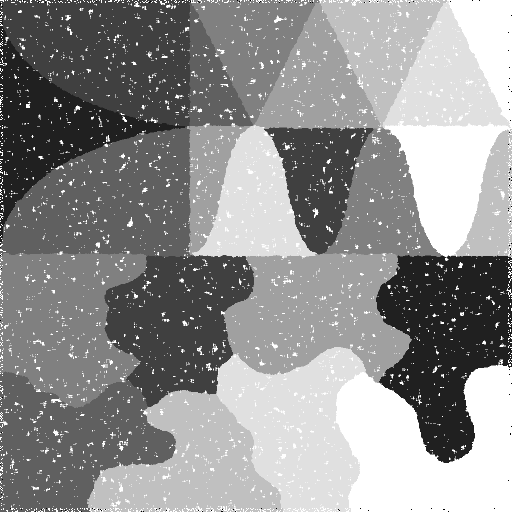}}%
	\caption{\tiny  Median Filter. \\MSE=0.003, PSNR=19.375, \\
	SSIM=0.560, IQI=0.664.}
	\label{fig:MedianFilter}
	\end{subfigure}%
	\begin{subfigure}[t]{0.2\textwidth} \centering
	\fbox{\includegraphics[width=2cm]{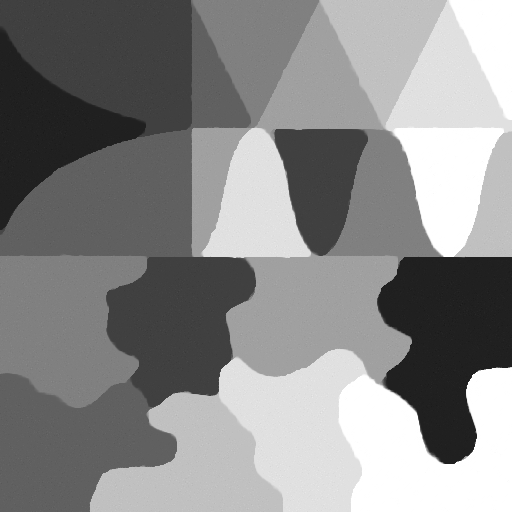}}%
	\caption{\tiny Total Variation Filter. \\$\text{MSE}=1.74\cdot10^{-4}$, PSNR=31.537, \\
	SSMI=0.956, IQI=0.130.}
	\end{subfigure}%
	\begin{subfigure}[t]{0.2\textwidth} \centering
	\fbox{\includegraphics[width=2cm]{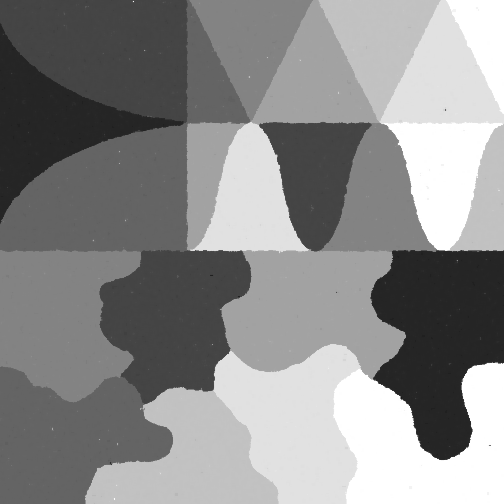}}%
	\caption{\tiny L-Estimator in~\eqref{eq:FilterCoefficientsSequentail}. \\$\text{MSE}=6.95\cdot10^{-4}$, PSNR=25.525, \\
	SSMI=0.914, IQI=0.779.}
	\end{subfigure}\\
	\text{} \hspace{3.3cm}
	\begin{subfigure}[t]{0.2\textwidth} \centering
	\fbox{\includegraphics[width=2cm]{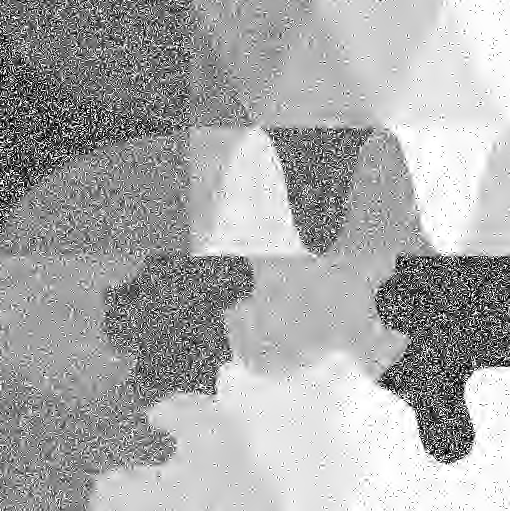}}%
	\caption{\tiny FDR filter. \\$\text{MSE}=0.012$, PSNR=13.235, \\
	SSMI=0.318, IQI=0.042.}
	\end{subfigure}%
	\begin{subfigure}[t]{0.2\textwidth} \centering
	\fbox{\includegraphics[width=2cm]{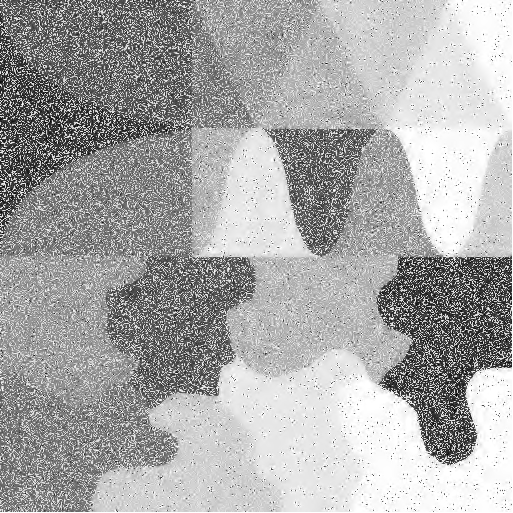}}%
	\caption{\tiny SURE filter. \\$\text{MSE}=0.013 $, PSNR=12.972, \\
	SSMI=0.195, IQI=0.045.}
	\end{subfigure}
		\begin{subfigure}[t]{0.2\textwidth} \centering
	\fbox{\includegraphics[width=2cm]{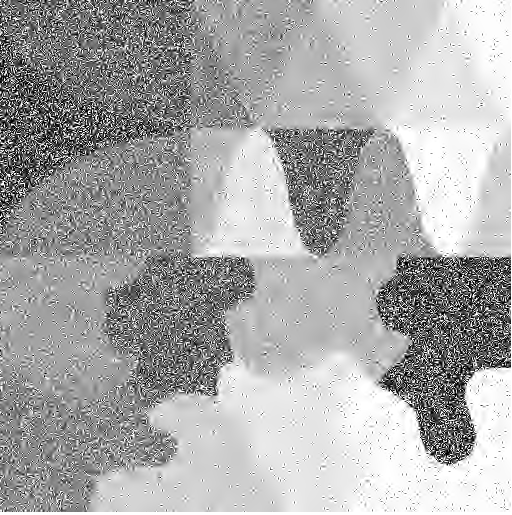}}%
	\caption{\tiny Bayes filter. \\
	$\text{MSE}=0.011$, PSNR=13.643, \\
	SSMI=0.303 , IQI=0.044.}
	\end{subfigure}\
     	\caption{Denoising salt \& pepper noise with $\rho=0.3, \rho_1=0.05$. }
	\label{fig:ExampleSmallNoise}
\end{figure*}

We now show that in the low-noise regime, our procedure in Algorithm~\ref{algo:Deno} competes with some of the state-of-the-art filters.  
The simulation results are presented in Fig.~\ref{fig:ExampleSmallNoise}.  In the simulation,
estimated values of the parameters $\rho$ and $\rho_1$ are used to train the L-estimator.  The estimates are computed as in~\eqref{eq:EstProb} and are given by $\hat{\rho}=0.3007$ and $\hat{\rho}_1=0.0508$ (recall the true values are $\rho=0.3$  and $\rho_1=0.05$). 
The coefficients of the L-estimator in~\eqref{eq:L-filter} are computed by using~\eqref{eq:FilterCoefficientsSequentail}  for $\rho < 0.5$ where the values of $\mathsf{r}_1(\cdot,\cdot)$ are those in Fig.~\ref{fig:ExampleOfr1AndPMF}.
From  Fig.~\ref{fig:ExampleSmallNoise}, we {observe that 
we have the} following performance across the four considered metrics:
\begin{align*}
&\textbf{ MSE: }  \text{Total Variation}\succ  \text{L-Estimator} \succ   \text{Median Filter} \succ \\
& \text{Avg. Filter} \succ   \text{E. Bayes Filter} \succ  \text{FDR Filter} \succ  \text{SURE Filter}, \\
&\textbf{ PSNR: }   \text{Total Variation} \succ   \text{L-Estimator} \succ   \text{Median Filter} \succ \\
& \text{Avg. Filter} \succ   \text{E. Bayes Filter} \succ  \text{FDR Filter} \succ  \text{SURE Filter}, \\
&\textbf{ SSIM: }     \text{Total Variation} \succ   \text{L-Estimator} \succ   \text{Median Filter} \succ\\
& \text{Avg. Filter} \succ   \text{FDR Filter} \succ  \text{E. Bayes Filter} \succ   \text{SURE Filter}, \\
&\textbf{ IQI: }    \text{L-Estimator}   \succ  \text{Median Filter}  \succ  \text{Total Variation} \succ \\%
& \text{Avg. Filter} \succ  \text{SURE Filter}  \succ  \text{E. Bayes Filter} \succ  \text{FDR Filter},
\end{align*} 
where, for a given metric $M$, the notation $A \succ B$ means that $A$ outperforms $B$ when $M$ is considered.
The fact that the median outperforms the total variation when the IQI metric is considered stems from the fact that the median filter allows for a better edge recovery compared to the total variation filter.
Moreover,  the L-estimator outperforms the median for all considered metrics, and has a competitive performance to that of the total variation filter (i.e., the performance is slightly worse over the MSE, PSNR and SSIM metrics, but significantly better over the IQI metric).   Finally, the L-estimator outperforms the wavelet-based filters over all metrics.

\smallskip
\noindent{\bf High-Noise Regime, $\rho \geq 0.5$.} 
Arguably, the noise-dominated regime is the most interesting case both
\begin{figure}
\center 
\begin{subfigure}[t]{0.45\textwidth}
	% This file was created by matlab2tikz.
%
%The latest updates can be retrieved from
%  http://www.mathworks.com/matlabcentral/fileexchange/22022-matlab2tikz-matlab2tikz
%where you can also make suggestions and rate matlab2tikz.
%
\definecolor{mycolor1}{rgb}{0.00000,0.44700,0.74100}%
\begin{tikzpicture}

\begin{axis}[%
width=6cm,
height=3.2cm,
at={(1.011in,0.642in)},
scale only axis,
xmin=0,
xmax=37,
xlabel style={font=\color{white!15!black}},
xlabel={$i$},
ymin=0,
ymax=1.2,
ylabel style={font=\color{white!15!black}},
ylabel={$\mathsf{r}_1(i,X^n)$},
axis background/.style={fill=white},
xmajorgrids,
ymajorgrids,
legend style={legend cell align=left, align=left, draw=white!15!black}
]
\addplot[ycomb, color=black, mark=o, mark options={solid, mycolor1}] table[row sep=crcr] {%
1	0.00268370785591857\\
2	0.0214790611255973\\
3	0.0865940704296006\\
4	0.231279249746004\\
5	0.458456967378913\\
6	0.717900048823131\\
7	0.923284041273873\\
8	1.00126747140822\\
9	0.933929124220493\\
10	0.764080898806018\\
11	0.56664274044359\\
12	0.412200740576884\\
13	0.344643369238846\\
14	0.376539461562697\\
15	0.493827402545022\\
16	0.662153812639272\\
17	0.834005246506547\\
18	0.959390667180407\\
19	1.00013709987551\\
20	0.942642065290249\\
21	0.802066166808987\\
22	0.614877197719313\\
23	0.423472047842768\\
24	0.260972943434633\\
25	0.143175469634375\\
26	0.0694808585241886\\
27	0.0295934320541898\\
28	0.0109580507559418\\
29	0.00348647742654579\\
30	0.000939080330296889\\
31	0.000209993335923557\\
32	3.79532227686012e-05\\
33	5.3314161455513e-06\\
34	5.46971747106557e-07\\
35	3.6566422606057e-08\\
36	1.20490273416725e-09\\
};
\addplot[forget plot, color=white!15!black] table[row sep=crcr] {%
0	0\\
40	0\\
};
%\addlegendentry{data1}

\end{axis}
\end{tikzpicture}%%
	\end{subfigure}
~	
	\vspace{0.5cm} 
	\begin{subfigure}[t]{0.45\textwidth}
	% This file was created by matlab2tikz.
%
%The latest updates can be retrieved from
%  http://www.mathworks.com/matlabcentral/fileexchange/22022-matlab2tikz-matlab2tikz
%where you can also make suggestions and rate matlab2tikz.
%
\definecolor{mycolor1}{rgb}{0.00000,0.44700,0.74100}%
\begin{tikzpicture}

\begin{axis}[%
width=6cm,
height=3.2cm,
at={(1.011in,0.642in)},
scale only axis,
xmin=0,
xmax=300,
ymin=0,
ymax=0.55,
xlabel={$\text{Support of} \ X$},
ylabel={pmf},
axis background/.style={fill=white},
xmajorgrids,
ymajorgrids,
legend style={legend cell align=left, align=left, draw=white!15!black}
]
\addplot[ycomb, color=black, mark=o, mark options={solid, mycolor1}] table[row sep=crcr] {%
0	0.21\\
150	0.3\\
255	0.49\\
};
\addplot[forget plot, color=white!15!black] table[row sep=crcr] {%
0	0\\
300	0\\
};
%\addlegendentry{data1}

\end{axis}

\end{tikzpicture}%%
	\end{subfigure}%
	\vspace{-5mm}
     	\caption{$\rho=0.7; \rho_1=0.3$. \textbf{Above:} $\mathsf{r}_1(i,X^n)$ for $i \in [n], n=36$; \textbf{Below:} pmf of $X$.}
	\label{fig:ExampleOfr1AndPMFHighNoise}
	\vspace{-0.55cm}
\end{figure}
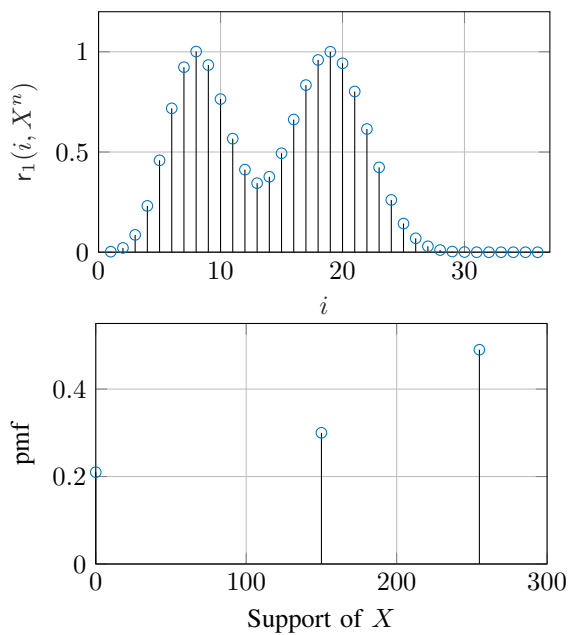
theoretically and practically. 
Consider, $\rho=0.7$ and $\rho_1=0.3$, where we expect that $70 \%$ of the image
is corrupted  by mostly salt noise. 
In Fig.~\ref{fig:ExampleOfr1AndPMFHighNoise}, we plot $\mathsf{r}_1(i,X^{n})$ for $n=36$ (i.e., $6 \times 6$ window).  
Here, approximately $7.56$ samples are corrupted by pepper noise, $17.64$ samples are corrupted by salt noise, and $10.8$ samples are noise-free. 
Thus, noisy pixels are the most common, which is a fundamental difference from the low-noise regime,
and justifies our choice of the L-estimator coefficients in~\eqref{eq:FilterCoefficientsSequentail} for $\rho \geq 0.5$. In other words, these coefficients are chosen to be directly 
proportional to $\mathsf{r}_1(\cdot,\cdot)$.
The performance of the proposed filter is evaluated in Fig.~\ref{fig:ExampleSPNoise} (top (a)-(h)), where the estimates of  $\rho$ and $\rho_1$ are computed from~\eqref{eq:EstProb} and given by $\hat{\rho}= 0.7003$ and $\hat{\rho}_1= 0.2995$.   

\begin{figure*}[t]
        \begin{subfigure}[t]{0.2\textwidth}
	\centering
	\fbox{\includegraphics[width=2cm]{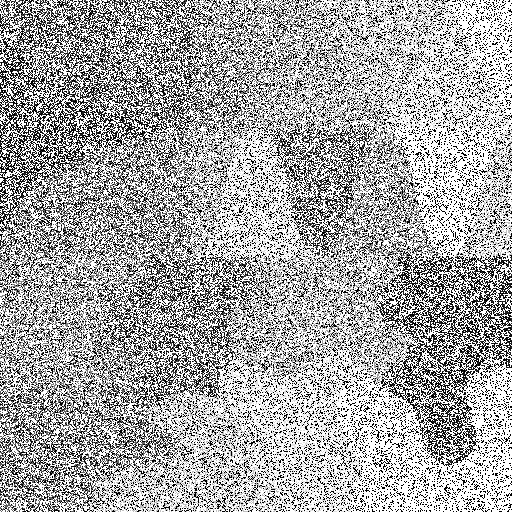}}%
	\caption{\tiny Noisy Image. \\ MSE=0.055, PSNR=6.548, \\
	SSIM=0.010, IQI=0.007.}
	\end{subfigure}%
	\begin{subfigure}[t]{0.2\textwidth}
	\centering
	\fbox{\includegraphics[width=2cm]{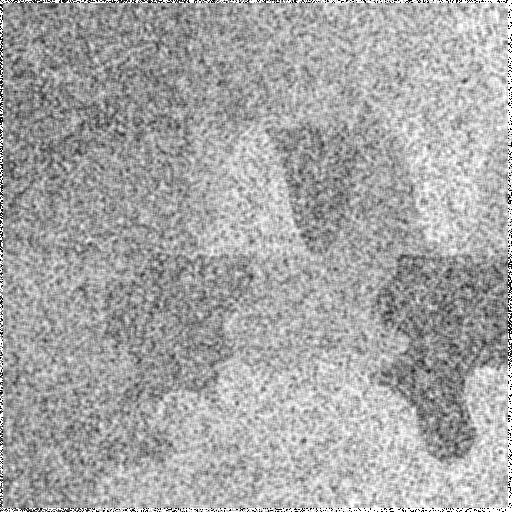}}%
	\caption{\tiny Average Filter. \\MSE=0.015, PSNR=12.177, \\
	SSIM=0.205, IQI=0.019.}
	\end{subfigure}
	\begin{subfigure}[t]{0.2\textwidth}
	\centering
	\fbox{\includegraphics[width=2cm]{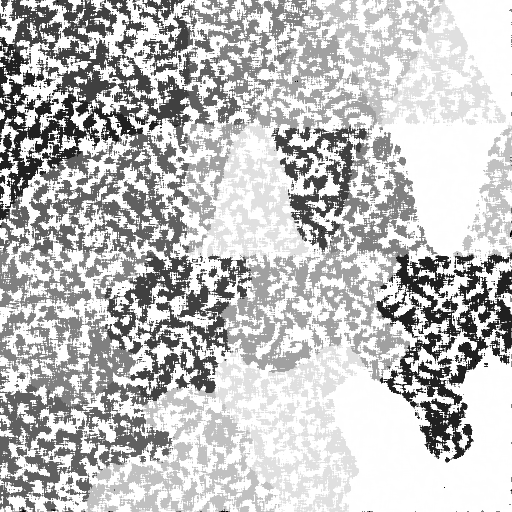}}%
	\caption{\tiny Median Filter. \\ MSE=0.032, PSNR=8.958, \\
	SSIM=0.190, IQI=0.016.}
	\end{subfigure}%
	\begin{subfigure}[t]{0.2\textwidth}
	\centering
	\fbox{\includegraphics[width=2cm]{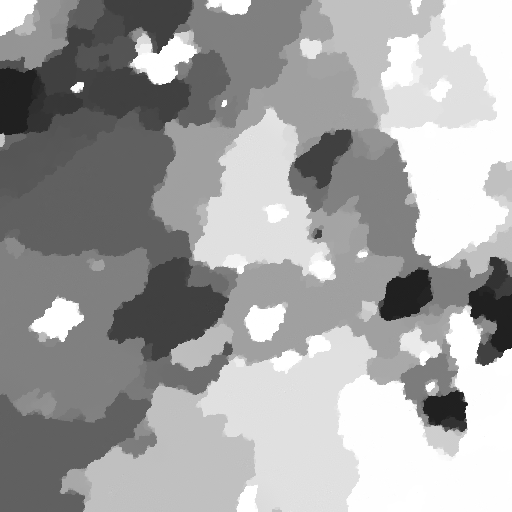}}%
	\caption{\tiny Total Variation Filter. \\ MSE= 0.021, PSNR=10.666, \\
	SSIM=0.750, IQI=0.136.}
	\end{subfigure}%
	\begin{subfigure}[t]{0.2\textwidth}
	\centering
	\fbox{\includegraphics[width=2cm]{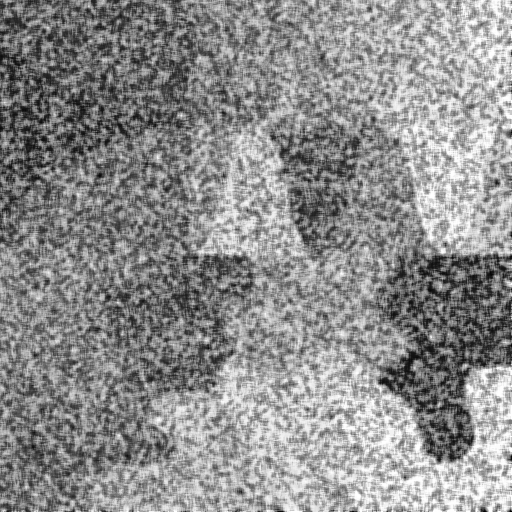}}%
	\caption{\tiny L-Estimator in~\eqref{eq:FilterCoefficientsSequentail}. \\
	MSE=0.010, PSNR=14.043,  \\ 
	SSIM= 0.114, IQI=0.021.}
	\end{subfigure} \\
	\text{}
	 \hspace{3.3cm}
	\begin{subfigure}[t]{0.2\textwidth}
	\centering
	\fbox{\includegraphics[width=2cm]{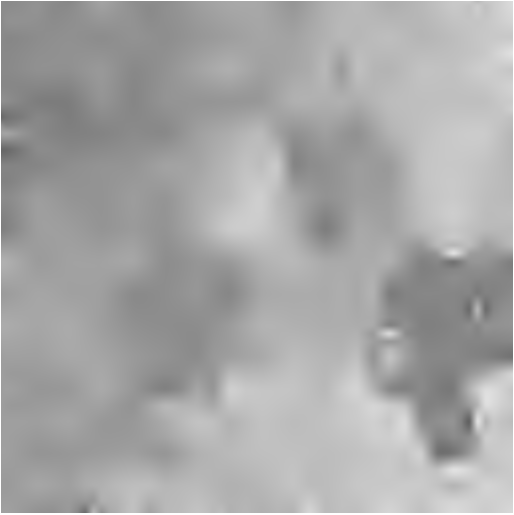}}%
	\caption{\tiny FDR filter.\\ MSE=0.014, PSNR=12.593, \\
	SSIM=0.770, IQI=0.021.}
	\end{subfigure} 
	\begin{subfigure}[t]{0.2\textwidth}
	\centering
	\fbox{\includegraphics[width=2cm]{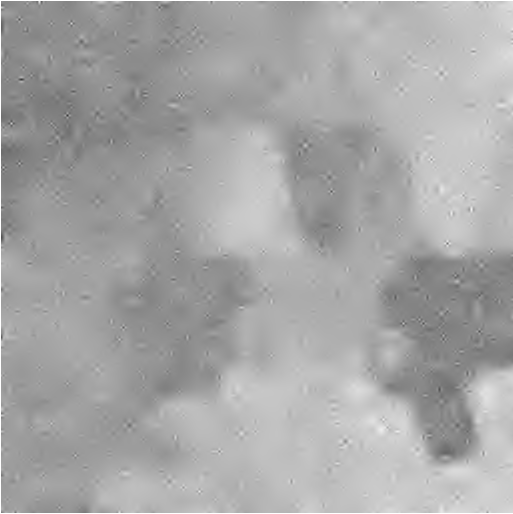}}%
	\caption{\tiny SURE filter.  \\
	MSE=0.014, PSNR=12.620,  \\
	SSIM= 0.726, IQI=0.016.}
	\end{subfigure} 
	\begin{subfigure}[t]{0.23\textwidth}
	\centering
	\fbox{\includegraphics[width=2cm]{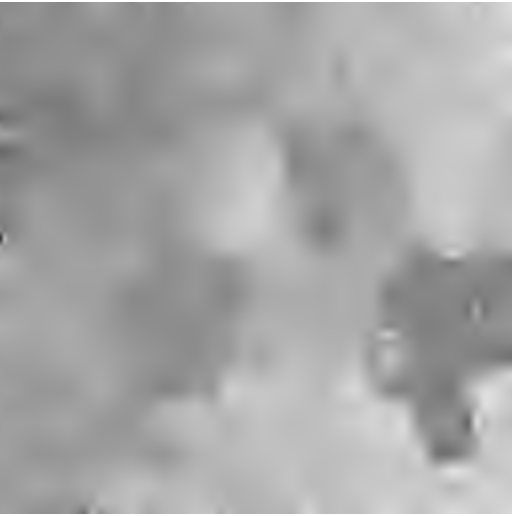}}%
	\caption{\tiny  Empirical Bayes Filter. \\ MSE= 0.014, PSNR=12.599, \\ SSIM=0.771, IQI=0.021.}
	\end{subfigure}\\
	\begin{subfigure}[t]{0.2\textwidth}
\centering
	\fbox{\includegraphics[width=2cm]{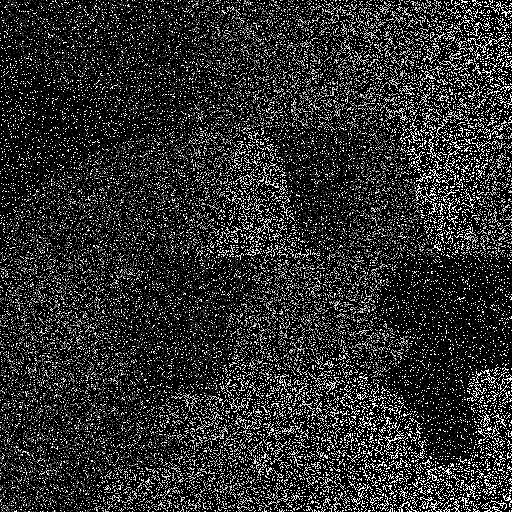}}%
	\caption{\tiny Noisy Image.\\ MSE=0.072, PSNR=5.384, \\
	SSIM=0.010, IQI=0.005.}
	\end{subfigure}%
	\begin{subfigure}[t]{0.2\textwidth}
\centering
	\fbox{\includegraphics[width=2cm]{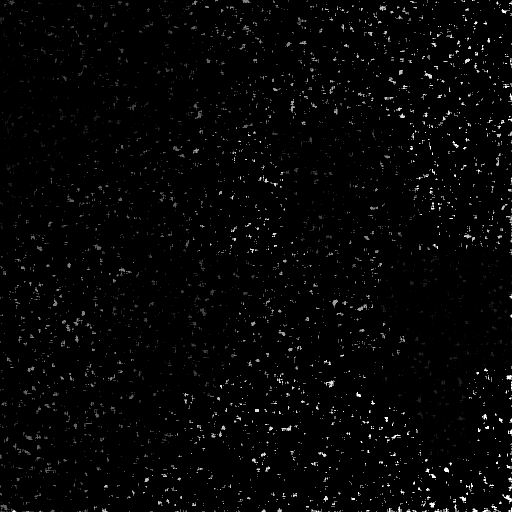}}%
	\caption{\tiny Median Filter.\\ MSE=0.088, PSNR=4.517, \\
	SSIM=0.010, IQI=0.000.}
	\end{subfigure}%
	\begin{subfigure}[t]{0.2\textwidth}
\centering
	\fbox{\includegraphics[width=2cm]{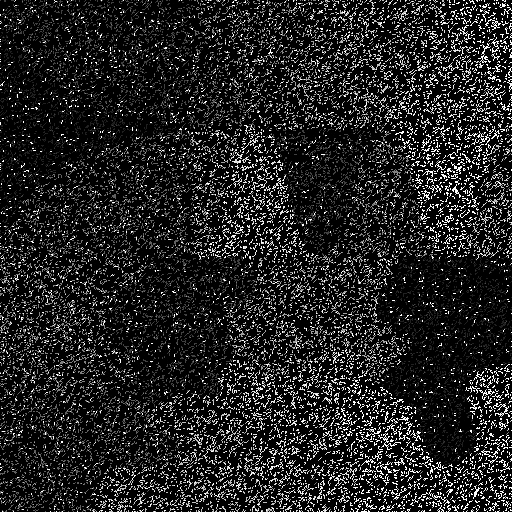}}%
	\caption{\tiny Total Variation Filter.\\ MSE=0.073, PSNR=5.336, \\
	SSIM=0.017, IQI=0.005.}
	\end{subfigure}%
	\begin{subfigure}[t]{0.2\textwidth}
	\centering
	\fbox{\includegraphics[width=2cm]{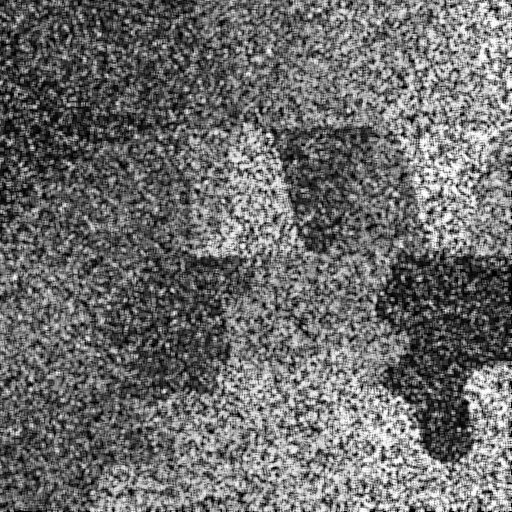}}%
	\caption{\tiny L-Estimator in~\eqref{eq:FilterCoefficientsSequentail}. \\
	MSE=0.018, PSNR=11.505, \\
	SSIM=0.061, IQI=0.019.}
	\end{subfigure}%
	\begin{subfigure}[t]{0.2\textwidth}
	\centering
	\fbox{\includegraphics[width=2cm]{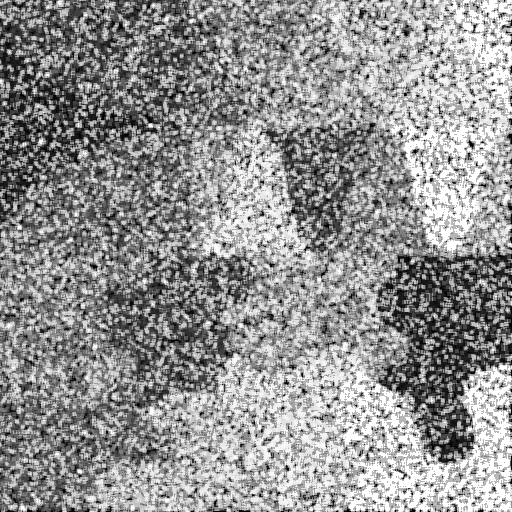}}%
	\caption{\tiny Sequential L-Estimator in~\eqref{eq:SequentialApproach}. \\
	MSE=0.016, PSNR=11.933, \\
	SSIM=0.048, IQI=0.019.}
	\end{subfigure}\\
	\text{}
	  \hspace{3.3cm}
	\begin{subfigure}[t]{0.2\textwidth}
	\centering
	\fbox{\includegraphics[width=2cm]{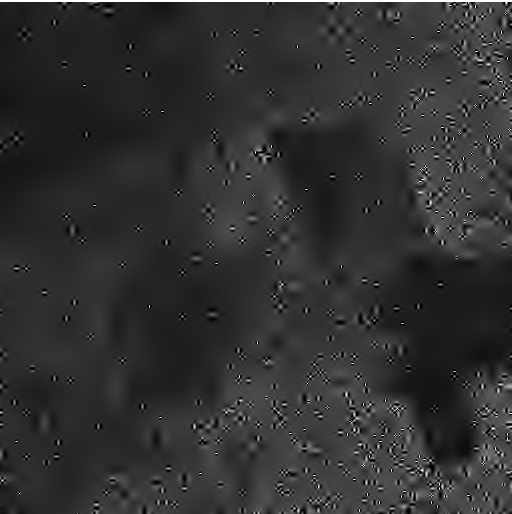}}%
	\caption{\tiny FDR filter. \\
	MSE=0.044, PSNR=7.487, \\ SSIM=0.498, IQI=0.016.}
	\end{subfigure}%
	\begin{subfigure}[t]{0.2\textwidth}
	\centering
	\fbox{\includegraphics[width=2cm]{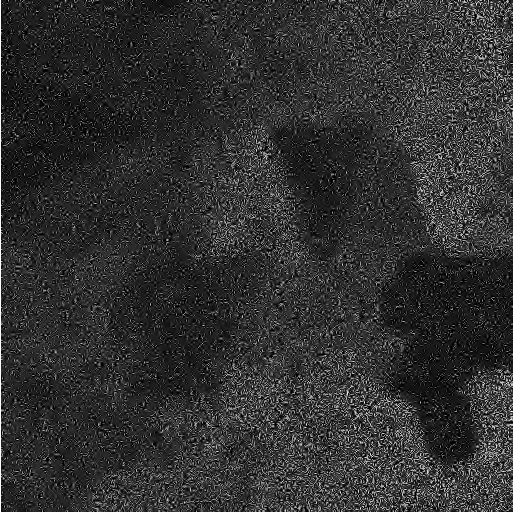}}%
	\caption{\tiny SURE filter. \\
	MSE=0.048, PSNR=7.166, \\ SSIM=0.078, IQI=0.006.}
	\end{subfigure}%
	\begin{subfigure}[t]{0.2\textwidth}
	\centering
	\fbox{\includegraphics[width=2cm]{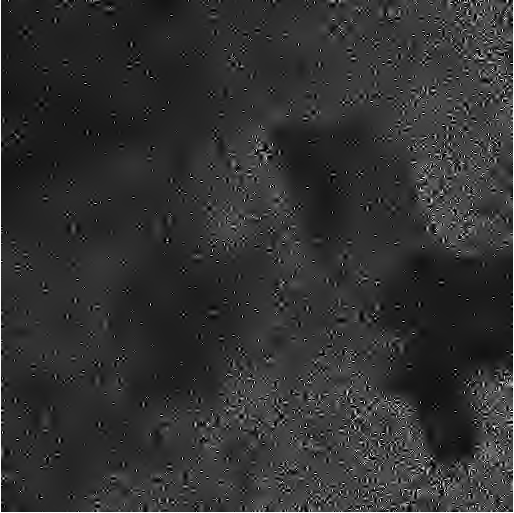}}%
	\caption{\tiny Empirical Bayes Filter. \\
	MSE=0.045, PSNR=7.445, SSIM=0.408, IQI=0.009.}
	\end{subfigure}%
     	\caption{Denoising salt \& pepper noise. 
	$\rho=0.7, \rho_1=0.3$ \textbf{top (a)-(h)}; $\rho=0.8, \rho_1=0.9$ \textbf{bottom (i)-(p)}.}
	\label{fig:ExampleSPNoise}
	\vspace{-5mm}
\end{figure*}

We observe that the median filter performs the worst for all the four considered image quality metrics, except for the SSIM metric where it outperforms the L-estimator.
This is expected since the median filter performance  degrades once the majority of the samples is corrupted. 
We have the following performance across the metrics: 
\begin{align*}
&\textbf{ MSE, PSNR: }  \text{L-Estimator} \succ   \text{FDR filter}=  \text{E. Bayes filter}  \\
&\quad \quad \quad = \text{SURE filter}  \succ  \text{Avg.\ Filter} \succ \text{Total Variation}   ,\\
&\textbf{ SSIM: } \text{E. Bayes filter}  \succ  \text{FDR filter} \succ  \text{Total Variation}  \\
& \quad \quad \quad \succ  \text{SURE filter}  \succ   \text{Avg.\ Filter} \succ \text{Median Filter},\\
&\textbf{ IQI: }    \text{Total Variation}     \succ  \text{FDR filter}= \text{E. Bayes filter} \\
&  \quad \quad \quad    \succ  \text{L-Estimator}   \succ   \text{Avg.\ Filter} \succ  \text{SURE filter},
\end{align*}  
where, for a given metric $M$, the notation $A \succ B$ means that $A$ outperforms $B$ when $M$ is considered.
The result above suggests that the L-estimator is very much competitive with the total variation filter and wavelet-based filters, and most of the time it also outperforms the average filter. 
It is also worth noting that visually the total variation filter appears to have the worst performance across all the four filters in terms of recovering the shapes, but this observation is not captured by the SSIM and IQI metrics.

\begin{figure*}[t]	
	\begin{subfigure}[t]{0.2\textwidth}
	\centering
	\fbox{\includegraphics[width=2cm]{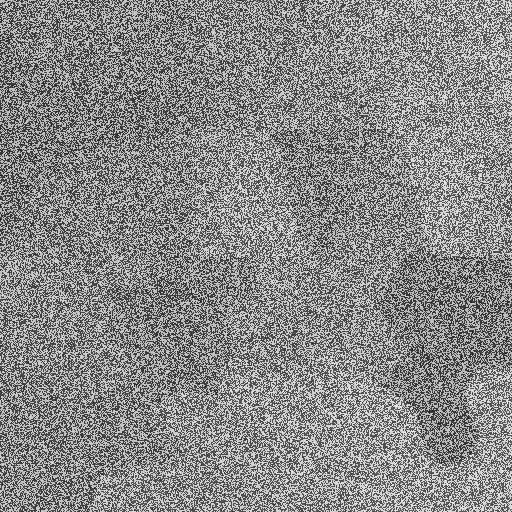}}%
	\caption{\tiny Noisy Image.\\ MSE=0.032, PSNR=8.917, \\
	SSIM=0.020, IQI=0.005.}         
	\end{subfigure}%
	\begin{subfigure}[t]{0.2\textwidth}\centering
	\fbox{\includegraphics[width=2cm]{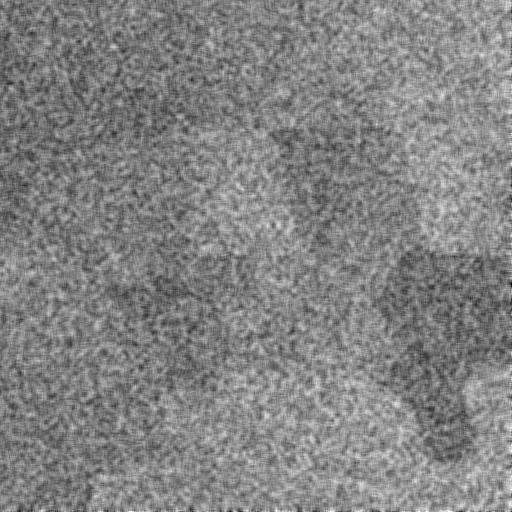}}%
	\caption{ \tiny Mean Filter.\\ MSE=0.015, PSNR=12.141, \\
	SSIM=0.221, IQI=0.012.}          
	\end{subfigure}%
	\begin{subfigure}[t]{0.2\textwidth}\centering
	\fbox{\includegraphics[width=2cm]{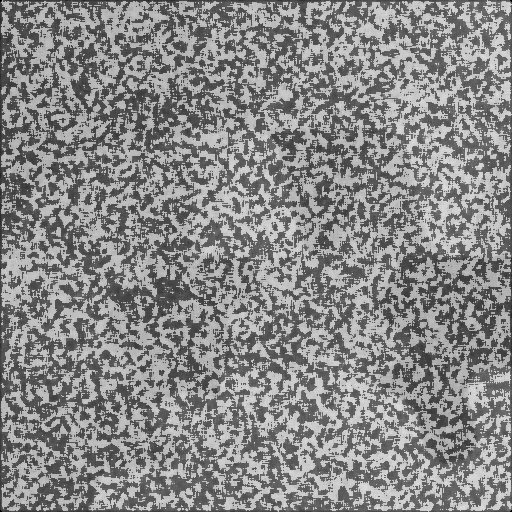}}%
	\caption{\tiny Median Filter.\\ MSE=0.026, PSNR=9.794, \\
	SSIM=0.056, IQI=0.002.
	}
	\end{subfigure}%
	\begin{subfigure}[t]{0.2\textwidth}\centering
	\fbox{\includegraphics[width=2cm]{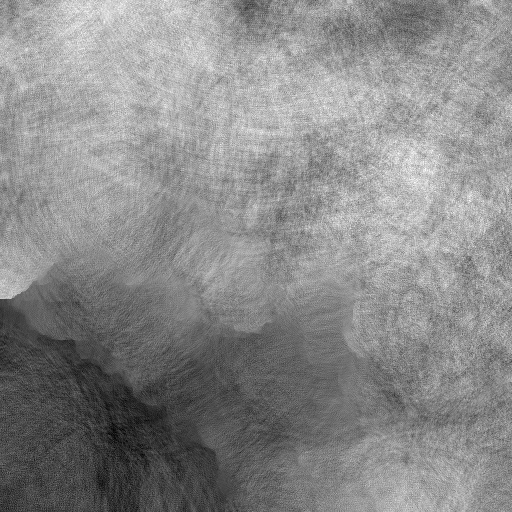}}%
	\caption{ \tiny Total Variation Filter .\\
	MSE=0.029, PSNR=9.382, \\
	SSIM=0.237, IQI=-0.001.
	}
	\end{subfigure}%
	\begin{subfigure}[t]{0.2\textwidth} \centering
	\fbox{\includegraphics[width=2cm]{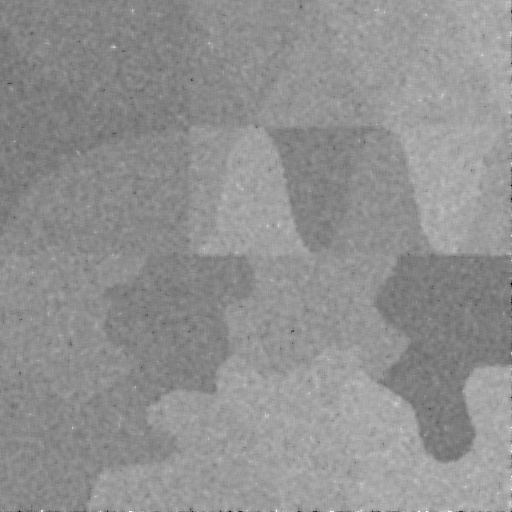}}%
	\caption{ \tiny  L-Estimator. \\
	MSE=0.014, PSNR=12.533, \\
	SSIM=0.627, IQI=0.021.}
	\end{subfigure} \\
	\text{}
	  \hspace{3.3cm}
	\begin{subfigure}[t]{0.2\textwidth} \centering
	\fbox{\includegraphics[width=2cm]{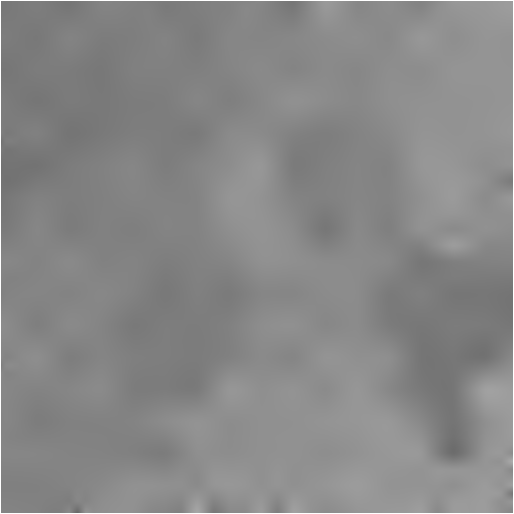}}%
	\caption{ \tiny  FDR filter \\
	MSE=0.016, PSNR=11.960, \\
	SSIM=0.780, IQI=0.048.}
	\end{subfigure} 	
	\begin{subfigure}[t]{0.2\textwidth} \centering
	\fbox{\includegraphics[width=2cm]{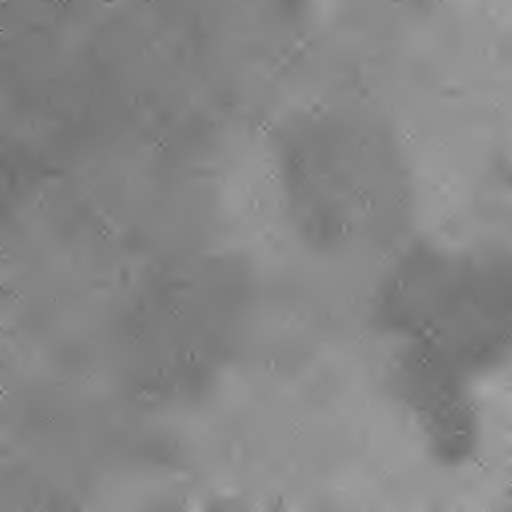}}%
	\caption{ \tiny  SURE filter \\
	MSE=0.016, PSNR=11.977, \\
	SSIM=0.775, IQI=0.016.}
	\end{subfigure} 	
	\begin{subfigure}[t]{0.2\textwidth} \centering
	\fbox{\includegraphics[width=2cm]{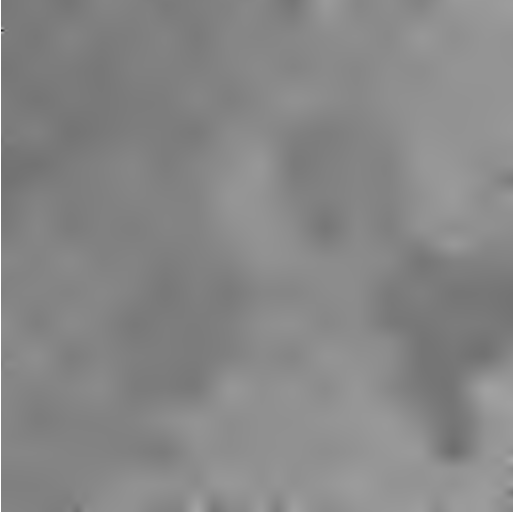}}%
	\caption{ \tiny Empirical Bayes filter \\
	MSE=0.016, PSNR=11.960, \\
	SSIM=0.781, IQI=0.048.}
	\end{subfigure} 		
     	\caption{Denoising mixed Gaussian noise with  $\mu=[ -2, 2]$, $\sigma^2=[0.15, 0.1]$  and $p=[0.5, 0.5]$.  }
	\label{fig:MixedExample}
	\vspace{-5mm}
\end{figure*}

In the extremely high-noise regime, we observe through extensive simulations that the  L-estimator performs significantly better than the total variation filter and the wavelet-based filters.
 The two bottom rows of Fig.~\ref{fig:ExampleSPNoise}~(i)-(p) show the filters performance for $n=16$ in an extremely noisy setting where $\rho=0.8$ and $\rho_1=0.9$. Here, we expect that $80\%$ of the image is corrupted by noise, and this perturbation is dominated by pepper noise.     
The estimates of  $\rho$ and $\rho_1$ are computed from~\eqref{eq:EstProb} and given by $\hat{\rho}=  0.7990$ and $\hat{\rho}_1=0.8992$.
In addition to the already used filters, 
in this regime
Fig.~\ref{fig:ExampleSPNoise} also shows the L-estimator performance where the coefficients are chosen based on the sequential approach, discussed in~\eqref{eq:Cond}, i.e.,   for $  i_k \in \bar{\mathcal{S}}_{1}^S$,
\begin{align}
\begin{split}
 \alpha_k&=  
  \frac{ \mathsf{r}_1(i_k,X^n| \mathcal{V}_{k-1})}{  \sum_{k=1}^{d}  \mathsf{r}_1(i_k,X^n| \mathcal{V}_{k-1}) },  \text{ when }  k \le d,  \text{ and } \\
  \alpha_k&=    0,   \text{ when }  k>d,   \label{eq:SequentialApproach}
  \end{split}
\end{align} 
where $\mathcal{V}_{k-1}$ is the set that contains the first $k-1$ indices of $ \bar{\mathcal{S}}_{1}^S$ and $d$ is the truncation parameter.   
The idea is to choose the $k$-th coefficient by conditioning on the information that has been already incorporated into the previously selected $k-1$ coefficients.   
We highlight that $d$ is introduced for computational purposes to speed the simulations, and with reference to Fig.~\ref{fig:ExampleSPNoise} we have $d=4$. Fig.~\ref{fig:ExampleSPNoise} shows that the total variation and median filters perform on the level of the noisy image.   
We also note that the L-estimator with coefficients as in~\eqref{eq:SequentialApproach} offers better MSE and PSNR metrics than the L-estimator with coefficients as in~\eqref{eq:FilterCoefficientsSequentail}, but  
performs either the same or worse for IQI and SSIM metrics. Finally, we highlight that the L-estimator based on the joint approach in~\eqref{eq:Joint} was also simulated, and observed to offer a similar performance to the sequential L-estimator in~\eqref{eq:FilterCoefficientsSequentail}.   The performance of all filters is as follows:
\begin{align*}
&\textbf{ MSE: }  \text{Seq. L-Estimator} \succ   \text{L-Estimator} \succ   \text{FDR filter}  \\
&\quad \quad  \succ  \text{E. Bayes filter} \succ \text{SURE filter}  \succ \text{Total Variation},   \\
&\textbf{PSNR: }  \text{Seq. L-Estimator} \succ   \text{L-Estimator} \succ   \text{FDR filter}  \\
& \quad  \quad \succ  \text{E. Bayes filter}  \succ \text{SURE filter}   \succ \text{Total Variation},   \\
&\textbf{ SSIM: }  \text{FDR filter}  \succ \text{E. Bayes filter}  \succ   \text{SURE filter}    \\
& \quad  \quad   \succ    \text{L-Estimator} \succ   \text{Seq. L-Estimator} \succ   \text{Total Variation}  ,\\
&\textbf{ IQI: }   \text{Seq. L-Estimator} \succ   \text{L-Estimator} \succ   \text{FDR filter}  \\
&  \quad \quad   \succ  \text{E. Bayes filter} \succ \text{SURE filter}  \succ \text{Total Variation}     .
\end{align*}
The above suggests that the L-estimator is very much competitive with the total variation filter and wavelet-based filters.  In particular, L-estimators perform better  than wavelet-based denoisers over the MSE, PSNR and IQI metrics, and better than the total variation denoiser over all metrics. 

\subsection{Image Denoising in Additive Continuous Noise}  \label{sec:contnoise}
Now we consider image denoising under the signal model $X=x+Z,$ where $x$ is the unknown pixel
value and $Z$ is random noise.   We consider two example noise
distributions,  Cauchy and mixed Gaussian. 
In particular, here we focus on the mixed Gaussian case, and an in the next subsection we will focus on  the case when $Z$ is  Cauchy. We also performed simulations for Gaussian $Z$ and observed that the total variation filter always outperforms our  proposed L-estimator. We believe
this is due to the fact that the total variation filter was designed for Gaussian noise perturbation.

With mixed Gaussian noise, our  denoising  works as in Algorithm~\ref{algo:Deno}, but the coefficients of the L-estimator  are now chosen with respect to the $\mathsf{r}_3 (\cdot,\cdot)$ measure\footnote{Simulations were performed also for the $\mathsf{r}_2 (\cdot,\cdot)$ measure and observed to have similar performance as for the $\mathsf{r}_3 (\cdot,\cdot)$ measure.}: 
\begin{equation}
\alpha_k=\frac{ \mathsf{r}_3(k,X^n)}{  \sum_{i=1}^{n} \mathsf{r}_3(i,X^n)  },  \label{eq:CoefficientForCauchy}
\end{equation}
where $n=25$ (i.e., $5\times5$ window). The Gaussian mixture has two components with means  $\mu=[-2, 2]$, variances $\sigma^2=[0.15, 0.1]$  and weights $p=[0.5, 0.5]$.
Fig.~\ref{fig:ExampleMixed} shows $\mathsf{r}_3(i,X^n)$.
\begin{figure}[h!]  	
 \centering
\begin{subfigure}[c]{0.45\textwidth}
	% This file was created by matlab2tikz.
%
%The latest updates can be retrieved from
%  http://www.mathworks.com/matlabcentral/fileexchange/22022-matlab2tikz-matlab2tikz
%where you can also make suggestions and rate matlab2tikz.
%
\definecolor{mycolor1}{rgb}{0.00000,0.44700,0.74100}%
\definecolor{mycolor2}{rgb}{0.85000,0.32500,0.09800}%
\begin{tikzpicture}

\begin{axis}[%
width=6cm,
height=3.2cm,
at={(1.011in,0.642in)},
scale only axis,
xmin=0,
xmax=25,
xlabel={$i$},
ymode=log,
log origin=infty,
ymin=0.00769930520532114,
ymax=10,
ylabel style={font=\color{white!15!black},at={(-0.2,0.5)}},
ylabel={$\mathsf{r}_3(i,X^n)$},
yminorticks=true,
axis background/.style={fill=white},
axis x line*=bottom,
axis y line*=left,
xmajorgrids,
ymajorgrids,
yminorgrids,
legend style={legend cell align=left, align=left, draw=white!15!black},
yticklabel style={/pgf/number format/fixed},
]
%\addplot[ycomb, color=mycolor1, mark=o, mark options={solid, mycolor1}] table[row sep=crcr] {%
%1	0.0323474006891333\\
%2	0.0183395163087015\\
%3	0.013909948041384\\
%4	0.0117348064418731\\
%5	0.0104154479177112\\
%6	0.00953975496358612\\
%7	0.00894192818428144\\
%8	0.00850893352639589\\
%9	0.00820344093999275\\
%10	0.00796513566771493\\
%11	0.00784527903278241\\
%12	0.00775133616609834\\
%13	0.00769930520532114\\
%14	0.00775848095393984\\
%15	0.00784775229312599\\
%16	0.00796271953842959\\
%17	0.00821278485884311\\
%18	0.0085096730905578\\
%19	0.00894516232354506\\
%20	0.00956717734560961\\
%21	0.0104108690372848\\
%22	0.0117226726666364\\
%23	0.0139088378281919\\
%24	0.0183646642057725\\
%25	0.032366038255078\\
%};
%\addlegendentry{G}

\addplot[ycomb, color=mycolor2, mark=o, mark options={solid, mycolor2}] table[row sep=crcr] {%
1	0.0508728612016438\\
2	0.0321975560700346\\
3	0.028323791383747\\
4	0.0301635308641056\\
5	0.0481579498835589\\
6	0.113499868682981\\
7	0.333433514850488\\
8	0.918374438128571\\
9	2.08181540947654\\
10	4.03044780403517\\
11	6.53855755152209\\
12	8.7072213979127\\
13	9.57361387673105\\
14	8.67527308889859\\
15	6.4336077860097\\
16	3.95810124957111\\
17	2.03970486351106\\
18	0.856253240000622\\
19	0.297432200732567\\
20	0.10664982312938\\
21	0.0345876609701561\\
22	0.0198857058679696\\
23	0.0190383974345928\\
24	0.0212705852040828\\
25	0.0333604298070642\\
};
%\addlegendentry{C}

\end{axis}

\end{tikzpicture}%%
	\end{subfigure}
	~
	
	\vspace{0.5cm}
	\begin{subfigure}[c]{0.45\textwidth}
	\input{mixGvsG.tex}%
	\end{subfigure}%
     	\caption{$\mathsf{r}_3(i,X^n),$ for $i \in [n]$ and $n=25$ (\textbf{top}), and pdf of the mixed Gaussian random variable $Z$ with $\mu=\protect \begin{bmatrix} -2 & 2 \protect \end{bmatrix}$ , $\sigma^2=\protect \begin{bmatrix} 0.15 & 0.1 \protect \end{bmatrix}$  and $p=\protect \begin{bmatrix} 0.5 & 0.5 \protect \end{bmatrix}$ (\textbf{bottom}).}
\vspace{-0.55cm}
	\label{fig:ExampleMixed}
\end{figure}
Fig.~\ref{fig:MixedExample}  shows all  filters performance for this setting, assuming  known $\mu$, $\sigma$ and $p$.  The performance of all filters is as follows:
 \begin{align*}
&\textbf{ MSE: }    \text{L-Estimator} \succ \text{Avg.\ Filter} \succ   \text{SURE filter}    \\
& \quad \quad \succ  \text{E. Bayes filter} = \text{FDR filter}   \succ \text{Median Filter},   \\
&\textbf{PSNR: }   \text{L-Estimator} \succ  \text{Avg.\ Filter} \succ  \text{SURE filter}     \\
&  \quad \quad \succ  \text{FDR filter}   \succ \text{E. Bayes filter} \succ  \text{Median Filter},  \\
&\textbf{ SSIM: } \text{E. Bayes filter}  \succ  \text{FDR filter}  \succ  \text{SURE filter}   \\
&  \quad \quad  \succ \text{L-Estimator} \succ   \text{Avg.\ Filter} \succ \text{Median Filter},\\
&\textbf{ IQI: }   \text{E. Bayes filter} \succ    \text{FDR filter} \succ   \text{L-Estimator}  \\
&  \quad \quad \succ \text{SURE filter}  \succ   \text{Avg.\ Filter} \succ \text{Median Filter}  .
\end{align*}
Here the L-estimator always outperforms the total variation filter. Moreover, it outperforms all filters over the MSE and PSNR metrics and its performance is comparable to those of wavelet-based filters over the SSIM and IQI metrics.  

\begin{figure*}[t]	
	\begin{subfigure}[t]{0.2\textwidth} \centering
	\fbox{\includegraphics[width=2cm]{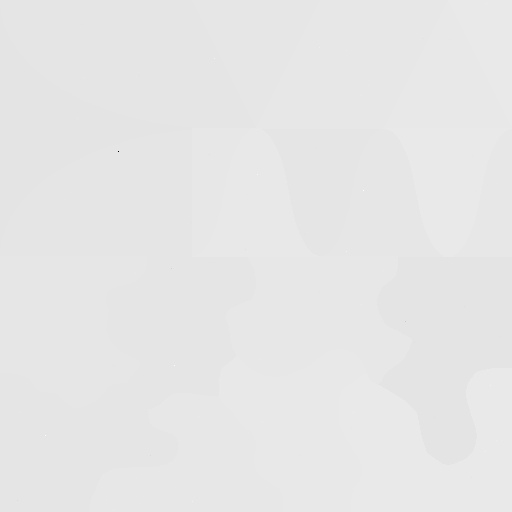}}%
	\caption{\tiny Noisy Image. \\ MSE=0.052, PSNR=6.765, \\
	SSIM=0.690, IQI=0.513}
	\end{subfigure}%
	\begin{subfigure}[t]{0.2\textwidth}\centering
	\fbox{\includegraphics[width=2cm]{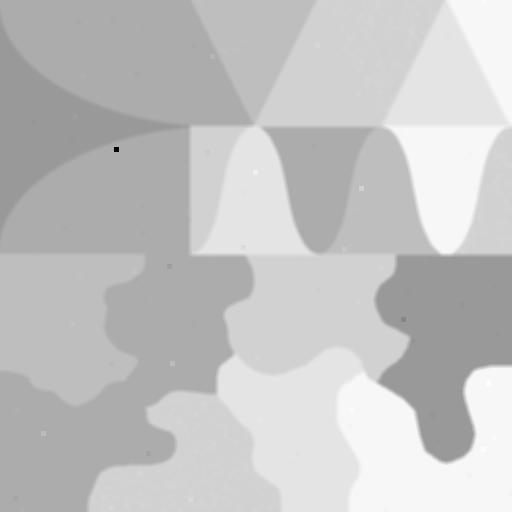}}%
	\caption{\tiny Mean Filter. \\ MSE=0.021, PSNR=10.810, \\
	SSIM=0.765, IQI=0.529.}
	\end{subfigure}%
	\begin{subfigure}[t]{0.2\textwidth}\centering
	\fbox{\includegraphics[width=2cm]{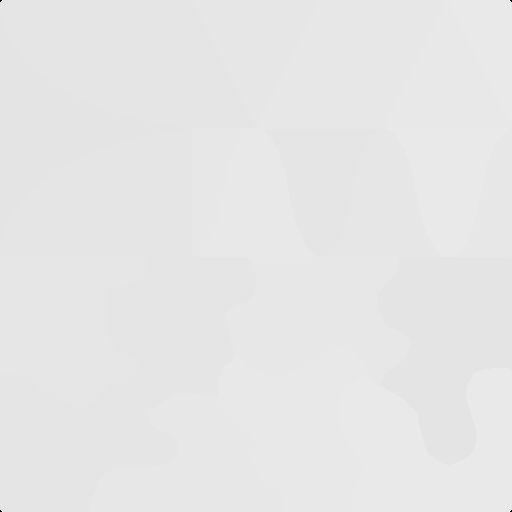}}%
	\caption{\tiny Median Filter.\\ MSE=0.052, PSNR=6.765, \\
	SSIM=0.690, IQI=0.653.
	}
	\end{subfigure}%
	\begin{subfigure}[t]{0.2\textwidth}\centering
	\fbox{\includegraphics[width=2cm]{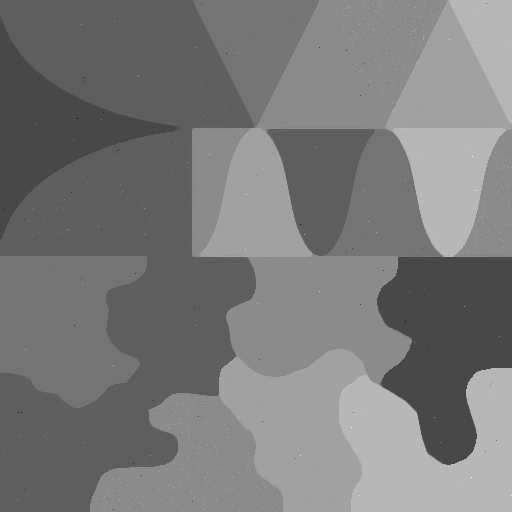}}%
	\caption{\tiny Total Variation Filter. \\
	MSE=0.007, PSNR=15.682, \\
	SSIM=0.887, IQI=0.710.
	}
	\end{subfigure}%
	\begin{subfigure}[t]{0.2\textwidth}\centering
	\fbox{\includegraphics[width=2cm]{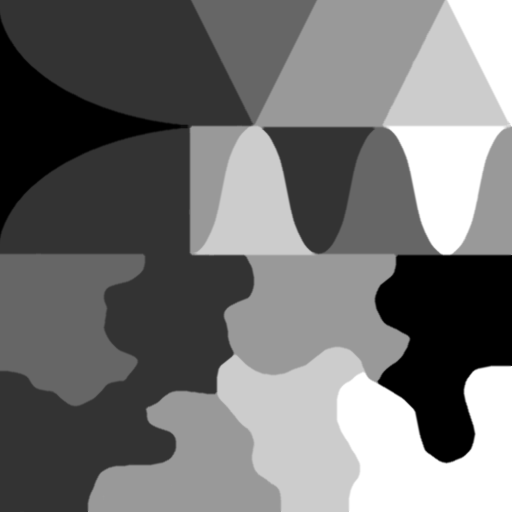}}%
	\caption{\tiny L-Estimator.\\
	MSE=0.004, PSNR=18.471, \\
	SSIM=0.770, IQI=0.732.}
	\end{subfigure}%
     	\caption{Denoising Cauchy noise with parameter $\gamma=0.0002$.}
	\label{fig:ExampleCauchyFilter}
\end{figure*}
\subsection{Cauchy Noise Distribution}
\label{app:Cauchy}
Now we consider a continuous noise model as discussed in Section~\ref{sec:contnoise}.
We let the noise, $Z$, be distributed according to a Cauchy distribution. This is a heavy tail distribution that models impulsive noise, which occurs commonly in 
image processing applications~\cite{barnett1966order}.   
In the presence of Cauchy noise, our  denoising algorithm works as in Algorithm~\ref{algo:Deno}, however, the coefficients of the L-estimator in~\eqref{eq:L-filter}  are now chosen with respect to the $\mathsf{r}_3 (\cdot,\cdot)$ measure as in~\eqref{eq:CoefficientForCauchy}.
\begin{figure}
\centering
\begin{subfigure}[t]{0.45\textwidth}
	% This file was created by matlab2tikz.
%
%The latest updates can be retrieved from
%  http://www.mathworks.com/matlabcentral/fileexchange/22022-matlab2tikz-matlab2tikz
%where you can also make suggestions and rate matlab2tikz.
%
\definecolor{mycolor1}{rgb}{0.00000,0.44700,0.74100}%
\definecolor{mycolor2}{rgb}{0.85000,0.32500,0.09800}%
\begin{tikzpicture}

\begin{axis}[%
width=6cm,
height=3.2cm,
at={(1.011in,0.642in)},
scale only axis,
xmin=0,
xmax=25,
xlabel={$i$},
ymode=log,
ymin=1e-09,
ymax=0.0001,
ylabel style={font=\color{white!15!black}},
ylabel={$\mathsf{r}_3(i,X^n)$},
yminorticks=true,
log origin=infty,
axis background/.style={fill=white},
axis x line*=bottom,
axis y line*=left,
xmajorgrids,
ymajorgrids,
yminorgrids,
legend style={legend cell align=left, align=left, draw=white!15!black}
]
%\addplot[ycomb, color=mycolor1, mark=o, mark options={solid, mycolor1}] table[row sep=crcr] {%
%1	1.03149023610048e-08\\
%2	5.86974098585211e-09\\
%3	4.45713467661825e-09\\
%4	3.76446853440712e-09\\
%5	3.32949717771441e-09\\
%6	3.05207643445809e-09\\
%7	2.85986405334469e-09\\
%8	2.72483815017437e-09\\
%9	2.61822518915785e-09\\
%10	2.55477459692729e-09\\
%11	2.51004721984841e-09\\
%12	2.4823799317012e-09\\
%13	2.47585646931512e-09\\
%14	2.47937010222172e-09\\
%15	2.51142475778497e-09\\
%16	2.55539139720423e-09\\
%17	2.62838139314049e-09\\
%18	2.72587525026626e-09\\
%19	2.86852586802068e-09\\
%20	3.05507134357531e-09\\
%21	3.33037714052711e-09\\
%22	3.74970575264774e-09\\
%23	4.47067707677091e-09\\
%24	5.85936417069679e-09\\
%25	1.03630994481215e-08\\
%};
%\addlegendentry{G}

\addplot[ycomb, color=mycolor2, mark=o, mark options={solid, mycolor2}] table[row sep=crcr] {%
1	inf\\
2	1.16857329666749e-05\\
3	5.84488240637876e-07\\
4	1.3135300861371e-07\\
5	4.99325205256051e-08\\
6	2.52572709859822e-08\\
7	1.49805683763368e-08\\
8	9.97692681531232e-09\\
9	7.41172307298716e-09\\
10	5.91658896989971e-09\\
11	5.04783592252878e-09\\
12	4.60688978355454e-09\\
13	4.48310334220854e-09\\
14	4.61648674031603e-09\\
15	5.05567746057442e-09\\
16	5.9204874689034e-09\\
17	7.36812628400461e-09\\
18	1.00769367550745e-08\\
19	1.50223425716596e-08\\
20	2.53721263891311e-08\\
21	5.02876398592827e-08\\
22	1.33261931205095e-07\\
23	5.83547392517281e-07\\
24	1.48138129359672e-05\\
25	inf\\
};
%\addlegendentry{C}

\end{axis}

\end{tikzpicture}%%
	\end{subfigure}
	~
	\begin{subfigure}[t]{0.45\textwidth}
	\input{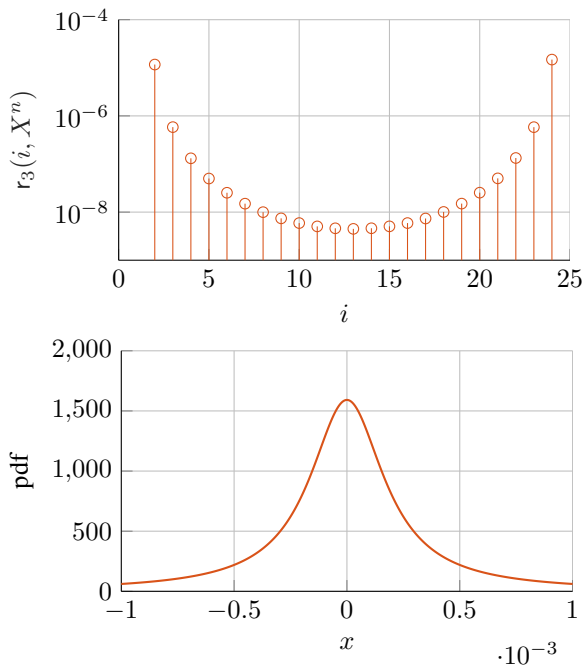}%
	\end{subfigure}%
 	\caption{Cauchy random variable $Z$ with $x_0 =0$ and $\gamma=0.0002$. \\
	\textbf{Top:} $\mathsf{r}_3(i,X^n),$ for $i \in [n]$ with $n=25$; \textbf{Bottom:} pdf of $Z$.}
         \label{fig:ExampleCauchy}
\vspace{-0.55cm}
\end{figure}

Using location parameter, $x_0 = 0$, and  scale parameter, $\gamma=0.0002$, in Fig.~\ref{fig:ExampleCauchy} we plot $\mathsf{r}_3(i,X^n)$ for $i \in \{2,\ldots, n-1\}$ and $n=25$ (i.e., $5 \times 5$ window) and the pdf of $Z$.
We highlight that $\mathsf{r}_3(1,X^n)=\mathsf{r}_3(n,X^n)=\infty$, which is due to the infinite variance of the Cauchy distribution. However, $\mathsf{r}_3(i ,X^n)< \infty$ for $i \in \{2,\ldots, n-1\}$, as we observe from Fig.~\ref{fig:ExampleCauchy}.

Fig.~\ref{fig:ExampleCauchyFilter} shows the performance of all the four filters for the case where the Cauchy scale parameter is  given by $\gamma=0.0002$, and it is assumed to be known. In this example, the L-estimator has the best  performance as compared to all other filters across all four metrics, except the SSIM where the total variation filter has a slightly better performance. It is also important to note that the MSE and PSNR  metrics might not be meaningful in this case since the Cauchy noise has infinite variance.  

\section{Conclusion}
This work has proposed an information-theoretic framework for finding the order statistic that contains the most information about the random sample. Specifically, the 
work has proposed three different information-theoretic measures to quantify the informativeness of order statistics. As an example, all three measures have been evaluated for  discrete Bernoulli and continuous Uniform random samples.
As an application, the proposed measures have been used to choose the coefficients of the L-estimator filter to denoise an image corrupted by random noise. To show the utility of our approach, several examples of various noise mechanisms (e.g., salt and pepper, mixed Gaussian) have been considered, and the proposed filters have been shown to be competitive with off-the-shelf filters (e.g., median, total variation and wavelet).

\appendices

\section{Joint Distribution of $k$ Ordered Statistics}

\subsection{Discrete Random Variables}
\label{app:Sec:ProofOfJointPMFD}
\begin{lem} 
\label{lem:OS_dist}
Let $X_1, X_2, \ldots, X_n$ be i.i.d.\ r.v. from a discrete distribution with cumulative distribution function $F(x)$.
Let $ \mathcal{S} =\{ ( i_1,i_2,\ldots,  i_k ): 1 \le  i_1< i_2< \ldots <  i_k \le n  \}$ and let 
$
P(X_{(\mathcal{S})} = x_{(\mathcal{S})}): = P(\cap_{i \in \mathcal{S}} (X_{(i)} = x_{(i)})),
$
where $x_{(i)}$ denotes the observation associated to index $i$.
Then, $P(X_{(\mathcal{S})} = x_{(\mathcal{S})})$ is non-zero only if
$
x_{(i_1)} \leq x_{(i_2)} \leq \ldots \leq x_{(i_k)}
$
and when this is true we have that
\begin{subequations}
\label{eq:JointProbNew}
\begin{align}
 &P \left (X_{(\mathcal{S})} = x_{(\mathcal{S})}  \right ) \notag\\
&=  F_{(\mathcal{S})} ( x_{(\mathcal{S})})  - \sum_{v=1}^k \Big ( (-1)^{v-1} \sum_{\substack{\mathcal{I} \subseteq \mathcal{S} \\ |\mathcal{I}| =v} } F_{(\mathcal{I}, \mathcal{I}^c)} ( x_{(\mathcal{I})}^-, x_{(\mathcal{I}^c)}  ) \Big ), \label{eq:JointProbNewa} \\
&F_{(\mathcal{I}, \mathcal{I}^c)} \left( x_{(\mathcal{I})}^-, x_{(\mathcal{I}^c)} \right ) \notag \\
& = \!\!\!\!\sum_{t_{[k]} \in \mathcal{T}}\! \prod_{j=1}^k {{n - \sum_{u=j+1}^k t_u} \choose {t_j}} g (t_{[k]}, \{x^-_{(\mathcal{I})} \cup x_{(\mathcal{I}^c)}\}  ),
\label{eq:JointProbNewb} \\
&g(t_{[k]},y_{(\mathcal{S})})\!=\!  [ 1\!-\!F(y_{(i_k)})]^{t_k}  [ F(y_{(i_k)})\!-\!F(y_{(i_{k-1})}) ]^{t_{k-1}} \ldots \notag\\
&\hspace{1cm}  [ F(y_{(i_2)})-F(y_{(i_{1})}) ]^{t_{1}}  [ F(y_{(1)}) ]^{n-\sum_{u=1}^k t_u}, \label{eq:JointProbNewc}
\end{align}
\end{subequations}
where
$
\mathcal{T} =  \{ t_{[k]} \geq 0: \sum_{m=j}^k t_m \leq n-i_j, \forall j \in [k] \},
$
with $t_{[k]} = \{t_1,t_2,\ldots, t_k\}, t_u \geq 0 \ \forall u \in[k]$.
\end{lem}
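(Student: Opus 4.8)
The plan is to prove the statement in two stages: first establish that the multinomial sum in \eqref{eq:JointProbNewb}--\eqref{eq:JointProbNewc} is precisely the joint cumulative distribution function $F_{(\mathcal{S})}$ of the order statistics at the indices in $\mathcal{S}$, and then recover the joint PMF by an inclusion--exclusion argument that converts ``$\le$'' events into ``$=$'' events. The support condition requires no work: since $X_{(1)} \le X_{(2)} \le \cdots \le X_{(n)}$ holds surely, we have $X_{(i_1)} \le \cdots \le X_{(i_k)}$ whenever $i_1 < \cdots < i_k$, so the joint PMF vanishes unless $x_{(i_1)} \le \cdots \le x_{(i_k)}$.

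For the CDF step, I would fix ordered thresholds $y_{(i_1)} \le \cdots \le y_{(i_k)}$ and partition $\mathbb{R}$ into the $k+1$ half-open bins $(-\infty, y_{(i_1)}]$, $(y_{(i_1)}, y_{(i_2)}]$, \ldots, $(y_{(i_k)}, \infty)$. The key observation is that $\{X_{(i_j)} \le y_{(i_j)}\}$ is equivalent to ``at least $i_j$ of the $n$ samples lie at or below $y_{(i_j)}$,'' i.e.\ ``at most $n-i_j$ samples exceed $y_{(i_j)}$.'' Letting $t_j$ denote the number of samples in the $j$-th upper bin, the count of samples exceeding $y_{(i_j)}$ equals $\sum_{m=j}^k t_m$, so the joint event $\bigcap_j \{X_{(i_j)} \le y_{(i_j)}\}$ is exactly the disjoint union over bin-count vectors in $\mathcal{T} = \{t_{[k]} \ge 0 : \sum_{m=j}^k t_m \le n - i_j\}$. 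By i.i.d.\ sampling, the probability of any fixed bin-count vector is the multinomial count $\prod_{j=1}^k \binom{n - \sum_{u=j+1}^k t_u}{t_j}$ times the product of bin probabilities gathered in $g(t_{[k]}, y_{(\mathcal{S})})$, and summing over $\mathcal{T}$ gives \eqref{eq:JointProbNewb}. This works for discrete parents because the bins are half-open, so ties are absorbed cleanly into bin memberships.

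For the PMF step, I would use the coordinatewise identity $\mathbf{1}[X_{(i_j)} = x_{(i_j)}] = \mathbf{1}[X_{(i_j)} \le x_{(i_j)}] - \mathbf{1}[X_{(i_j)} \le x_{(i_j)}^-]$, where $x_{(i_j)}^-$ encodes the strict event through $F(x^-) = P(X < x)$. Taking the product over $j \in \mathcal{S}$ and expanding yields $\prod_j \mathbf{1}[X_{(i_j)} = x_{(i_j)}] = \sum_{\mathcal{I} \subseteq \mathcal{S}} (-1)^{|\mathcal{I}|} \prod_{j \in \mathcal{I}} \mathbf{1}[X_{(i_j)} \le x_{(i_j)}^-] \prod_{j \in \mathcal{I}^c} \mathbf{1}[X_{(i_j)} \le x_{(i_j)}]$. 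Taking expectations identifies each summand with the mixed joint CDF $F_{(\mathcal{I}, \mathcal{I}^c)}(x_{(\mathcal{I})}^-, x_{(\mathcal{I}^c)})$; peeling off the $\mathcal{I} = \varnothing$ term, which equals $F_{(\mathcal{S})}(x_{(\mathcal{S})})$, and rewriting $(-1)^{|\mathcal{I}|} = -(-1)^{|\mathcal{I}|-1}$ for $|\mathcal{I}| = v \ge 1$ produces \eqref{eq:JointProbNewa}.

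The main obstacle I anticipate is the bookkeeping where discreteness meets the formulas. First, I must pin down the direction of the inequalities defining $\mathcal{T}$ and verify that the telescoping product of binomial coefficients is the intended multinomial count. Second, and more delicate, when consecutive support values coincide, say $x_{(i_j)} = x_{(i_{j+1})}$, the mixed arguments $(x_{(\mathcal{I})}^-, x_{(\mathcal{I}^c)})$ supplied to \eqref{eq:JointProbNewb} may fail to be monotone, and some bin probabilities would formally turn negative. I would handle this either by first proving the formula for strictly increasing thresholds and then treating the tie case directly from the monotonicity $X_{(i_1)} \le \cdots \le X_{(i_k)}$ (which collapses the offending intersections onto lower-dimensional events), or by verifying that these negative contributions cancel consistently within the inclusion--exclusion sum. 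Everything else is routine substitution of \eqref{eq:JointProbNewb} into the expansion.
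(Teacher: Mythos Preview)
Your proposal is correct and follows the same two-part structure as the paper: multinomial counting for the joint CDF in \eqref{eq:JointProbNewb}--\eqref{eq:JointProbNewc}, and inclusion--exclusion to pass from CDF to PMF in \eqref{eq:JointProbNewa}. The counting argument you give for the CDF matches the paper's essentially verbatim (the paper also reads $\{X_{(i_j)} \le y_{(i_j)}\}$ as ``at most $n-i_j$ observations exceed $y_{(i_j)}$'' and then invokes multinomial counting over the bins).

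The one difference is in how the inclusion--exclusion is packaged. You expand the product $\prod_j\bigl(\mathbf{1}[X_{(i_j)}\le x_{(i_j)}]-\mathbf{1}[X_{(i_j)}< x_{(i_j)}]\bigr)$ directly and take expectations, which is the standard and cleanest route. The paper instead conditions on $\{X_{(\mathcal{S})}\le x_{(\mathcal{S})}\}$, defines $A_t=\{X_{(i_t)}<x_{(i_t)}\}$ inside that conditional event, applies De~Morgan to write $P(\cap_t A_t^c\mid\cdot)=1-P(\cup_t A_t\mid\cdot)$, runs inclusion--exclusion on the union, and finally multiplies through by $P(X_{(\mathcal{S})}\le x_{(\mathcal{S})})$ to undo the conditioning. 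Both arrive at the same alternating sum; your indicator-product formulation is more direct and avoids the detour through conditional probabilities. Your remark about ties among the $x_{(i_j)}$ is also a point the paper does not address explicitly.
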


\begin{proof}
For all $t \in [k]$, we define the event
\begin{align}
\label{eq:At}
(A_t)^c = \{ X_{(i_t)} = x_{(i_t)} \, | \, X_{(i_t)} \leq x_{(i_t)}  \},
\end{align}
where $(\cdot)^c$ denotes the complement of the event.  First notice that by De Morgan's Law we have that
\begin{align}
&P \left(X_{(\mathcal{S})} = x_{(\mathcal{S})} \, |  \, X_{(\mathcal{S})} \leq x_{(\mathcal{S})} \right ) \notag \\
 &= P \left (\cap_{t =1}^k (A_t)^c \,| \,X_{(\mathcal{S})} \leq x_{(\mathcal{S})} \right) \notag \\
&= P \left ( \left (\cup_{t =1}^k A_t \right )^c | X_{(\mathcal{S})} \leq x_{(\mathcal{S})}\right) \notag \\
&=1 - P \left ( \cup_{t =1}^k A_t \, | \, X_{(\mathcal{S})} \leq x_{(\mathcal{S})}  \right).
\label{eq:DeMorgan}
\end{align}
Next we study the probability on the right side of \eqref{eq:DeMorgan}.  First, applying the inclusion-exclusion principle and, for any subset $\mathcal{I} \subseteq \mathcal{S}$, defining the event $A_{\mathcal{I}} : = \cap_{i \in \mathcal{I}} A_i$, we find
\begin{align}
&P \left ( \cup_{t =1}^k A_t \, | \, X_{(\mathcal{S})} \leq x_{(\mathcal{S})}  \right)  \notag \\
&= \sum_{t=1}^k \Big ( (-1)^{t-1} \sum_{\substack{\mathcal{I} \subseteq \mathcal{S} \\ |\mathcal{I}| =t} } P \left (A_{\mathcal{I}} \, |\,  X_{(\mathcal{S})} \leq x_{(\mathcal{S})} \right ) \Big ).
\label{eq:IncExcl}
\end{align}
Next notice that $P(\mathcal{X} \, | \, \mathcal{Y}) = P(\mathcal{X},\mathcal{Z} \, | \, \mathcal{Y})$ for $\mathcal{Z} \subseteq \mathcal{Y}$.  Then, for any set $\mathcal{I} \subseteq \mathcal{S}$, denoting $\mathcal{I}^c = \mathcal{S} \setminus \mathcal{I}$,
\begin{align}
&P (A_{\mathcal{I}} \, |\,  X_{(\mathcal{S})} \leq x_{(\mathcal{S})})   \notag \\
&=  P  (A_{\mathcal{I}}, X_{(\mathcal{I}^c)} \leq x_{(\mathcal{I}^c)}  \, |\, X_{(\mathcal{S})} \leq x_{(\mathcal{S})}  ) \notag \\
&=  P ( X_{(\mathcal{I})} < x_{(\mathcal{I})}, X_{(\mathcal{I}^c)} \leq x_{(\mathcal{I}^c)}  \, |\, X_{(\mathcal{S})} \leq x_{(\mathcal{S})} ),
\label{eq:IncExcl2}
\end{align}
where in the last equality we use the definition of $A$'s from~\eqref{eq:At}.  Now combining \eqref{eq:DeMorgan}-\eqref{eq:IncExcl2}, we have that
\begin{align}
\label{eq:JointProb}
&P (X_{(\mathcal{S})} = x_{(\mathcal{S})} \, | \,  X_{(\mathcal{S})} \leq x_{(\mathcal{S})}  ) =1 - \notag \\
& \sum_{t=1}^k  (-1)^{t-1}\!\! \sum_{\substack{\mathcal{I} \subseteq \mathcal{S} \\ |\mathcal{I}| =t} } \!\! P  ( X_{(\mathcal{I})} \!<\! x_{(\mathcal{I})}, X_{(\mathcal{I}^c)} \!\leq \! x_{(\mathcal{I}^c)} \, | \, X_{(\mathcal{S})} \! \leq \! x_{(\mathcal{S})}) .
\end{align}
We now note that the event in the conditioning in~\eqref{eq:JointProb}, namely, $X_{(\mathcal{S})} \leq x_{(\mathcal{S})}$, is a superset of the other event considered, $X_{(\mathcal{S})} = x_{(\mathcal{S})}$. It therefore follows that by multiplying both sides of~\eqref{eq:JointProb} by $P (X_{(\mathcal{S})} \leq x_{(\mathcal{S})}  )$, we obtain our probability of interest.  In other words,
\begin{align*}
&P(X_{(\mathcal{S})} \leq  x_{(\mathcal{S})}) P \left (X_{(\mathcal{S})} = x_{(\mathcal{S})} \, | \,  X_{(\mathcal{S})} \leq x_{(\mathcal{S})} \right ) \\
&= P \left (X_{(\mathcal{S})} = x_{(\mathcal{S})} \text{ and }  X_{(\mathcal{S})} \leq x_{(\mathcal{S})} \right ) \\
& = P(X_{(\mathcal{S})} =   x_{(\mathcal{S})}).
\end{align*}
Using the above in \eqref{eq:JointProb}, we find a representation for
 $P(X_{(\mathcal{S})} = x_{(\mathcal{S})})$ as:
\begin{align}
& P  (X_{(\mathcal{S})} = x_{(\mathcal{S})})   =P (X_{(\mathcal{S})} \leq x_{(\mathcal{S})}  ) \notag \\
&-\sum_{v=1}^k (-1)^{v-1} \sum_{\substack{\mathcal{I} \subseteq \mathcal{S} \\ |\mathcal{I}| =v} } P \left ( X_{(\mathcal{I})} < x_{(\mathcal{I})}, X_{(\mathcal{I}^c)} \leq x_{(\mathcal{I}^c)} \right ). \label{eq:JointProbNewProof}
\end{align}
We finally note that  the probability on the right side of \eqref{eq:JointProbNewProof} is equal to the result given in~\eqref{eq:JointProbNewa}, which can be seen by defining, $F_{(\mathcal{S})}(x_{(\mathcal{S})}) := P (X_{(\mathcal{S})} \leq x_{(\mathcal{S})} )$, and for all $\mathcal{I} \subseteq \mathcal{S}$,
\begin{align}
\label{eq:jointCDFGen}
F_{(\mathcal{I},\mathcal{I}^c)} \left(x^-_{(\mathcal{I})},x_{(\mathcal{I}^c)} \right) \! \!:= \!\!P\left(X_{(\mathcal{I})} \!\!<\!\! x_{(\mathcal{I})},X_{(\mathcal{I}^c)} \!\leq\! x_{(\mathcal{I}^c)} \right).
\end{align}
Now we discuss the results in \eqref{eq:JointProbNewb} and \eqref{eq:JointProbNewc}.
In words, the definition in \eqref{eq:jointCDFGen} implies that, for all $\mathcal{I} \subseteq \mathcal{S}$, the term $F_{(\mathcal{I}, \mathcal{I}^c)} ( x_{(\mathcal{I})}^-, x_{(\mathcal{I}^c)})$ is the probability that:
\begin{itemize}
\item For all $i_j \in \mathcal{I}$ with $j \in [k]$ there are at least $i_j$ observations less than $x_{(i_j)}$;
\item For all $i_t \in \mathcal{I}^c$ with $t \in [k]$ there are at least $i_t$ observations less than or equal to $x_{(i_t)}$.
\end{itemize}
Equivalently, we also note that $F_{(\mathcal{I}, \mathcal{I}^c)} ( x_{(\mathcal{I})}^-, x_{(\mathcal{I}^c)}  )$ can be computed as the probability that:
\begin{itemize}
\item For all $i_j \in \mathcal{I}$ with $j \in [k]$ there are at most $(n-i_j)$ observations  greater than or equal to $x_{(i_j)}$;
\item For all $i_t \in \mathcal{I}^c$ with $t \in [k]$ there are at most $(n-i_t)$ observations greater than $x_{(i_t)}$.
\end{itemize}
Thus, computing $P(X_{(\mathcal{S})} = x_{(\mathcal{S})})$ boils down to computing $F_{(\mathcal{I}, \mathcal{I}^c)} ( x_{(\mathcal{I})}^-, x_{(\mathcal{I}^c)} )$ for all subsets $\mathcal{I} \subseteq \mathcal{S}$.
Finally, simple counting techniques are used to show that $F_{(\mathcal{I}, \mathcal{I}^c)} ( x_{(\mathcal{I})}^-, x_{(\mathcal{I}^c)} )$ is equal to~\eqref{eq:JointProbNewb} with the function $g(\cdot,\cdot)$ is defined in~\eqref{eq:JointProbNewc}.
This concludes the proof of Lemma~\ref{lem:OS_dist}.
\end{proof}

\subsection{Continuous Random Variables}
\label{app:Sec:ProofOfJointPMFC}
We state a lemma from~\cite{arnold1992first} 
that computes the joint distribution of $k$ order statistics, and is the counterpart of Lemma~\ref{lem:OS_dist} for the case of continuous random variables.
\begin{lem} 
\label{lem:OS_cont}
Let $X_1, X_2, \ldots, X_n$ be i.i.d.\ r.v.\ from an absolutely continuous distribution with cumulative distribution function $F(x)$ and probability density function $f(x)$.
Let $ \mathcal{S} =\{ ( i_1,i_2,\ldots,  i_k ): 1 \le  i_1< i_2< \ldots <  i_k \le n  \}$ 
and 
\begin{align*}
f_{X_{(\mathcal{S})}}(x_{(\mathcal{S})}) = f_{X_{(i_1)},X_{(i_2)}, \ldots , X_{(i_k)}}(x_{(i_1)},x_{(i_2)}, \ldots , x_{(i_k)})
\end{align*}
be the joint probability density function of $X_{(\mathcal{S})}$, where $x_{(i)}$ denotes the observation associated to index $i$. Then, $f_{X_{(\mathcal{S})}}(x_{(\mathcal{S})})$ is non-zero only if
$
-\infty < x_{(i_1)} < x_{(i_2)} < \ldots < x_{(i_k)} < \infty,
$
and, when this is true, its expression is given by
\begin{align*}
&f_{X_{(\mathcal{S})}}(x_{(\mathcal{S})}) \\
&= g(n,i_{(\mathcal{S})}) \prod_{t=1}^k f(x_{(i_t)}) 
 \prod_{t=1}^{k+1} \left [F(x_{(i_t)}) - F(x_{i_{t-1}}) \right ]^{i_t-i_{t-1}-1},
\end{align*}
where $x_{(i_0)} = -\infty$, $x_{(i_{k+1})} = +\infty$, and, with $i_0 = 0$ and $i_{k+1} = n+1$,
\begin{align*}
g(n,i_{(\mathcal{S})}) = \frac{n!}{\prod_{t=1}^{k+1} (i_t - i_{t-1}-1)!}.
\end{align*}
\end{lem}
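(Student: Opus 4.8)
The plan is to establish this joint density via the classical infinitesimal multinomial argument, which reduces the computation to counting the number of ways to allocate the $n$ i.i.d.\ observations among disjoint regions of the real line determined by the prescribed values $x_{(i_1)} < \cdots < x_{(i_k)}$.

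First I would fix an admissible point $-\infty < x_{(i_1)} < \cdots < x_{(i_k)} < \infty$ and introduce small positive increments $\delta_1, \ldots, \delta_k$. The key observation is that the event $\{X_{(i_t)} \in (x_{(i_t)}, x_{(i_t)} + \delta_t] \text{ for all } t \in [k]\}$ occurs, up to terms that are $o(\delta_1 \cdots \delta_k)$, exactly when the $n$ observations split into $2k+1$ groups: for each $t \in [k]$ there is exactly one observation in the singleton window $(x_{(i_t)}, x_{(i_t)} + \delta_t]$, and for each $t \in [k+1]$ there are exactly $i_t - i_{t-1} - 1$ observations in the open gap $(x_{(i_{t-1})}, x_{(i_t)})$, using the endpoint conventions $x_{(i_0)} = -\infty$, $x_{(i_{k+1})} = +\infty$, $i_0 = 0$, and $i_{k+1} = n+1$. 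One then checks that these group sizes sum to $n$, since $k + \sum_{t=1}^{k+1}(i_t - i_{t-1} - 1) = k + (n+1) - (k+1) = n$.

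Next I would count and weight these configurations. Because the observations are i.i.d., the number of ways to assign the $n$ labels to the $2k+1$ groups (of size $1$ in each singleton window and size $i_t - i_{t-1} - 1$ in each gap) is the multinomial coefficient $\frac{n!}{\prod_{t=1}^k 1! \cdot \prod_{t=1}^{k+1}(i_t - i_{t-1}-1)!} = g(n, i_{(\mathcal{S})})$. Each singleton window carries probability $f(x_{(i_t)})\,\delta_t + o(\delta_t)$, while each gap carries probability $[F(x_{(i_t)}) - F(x_{(i_{t-1})})]^{\,i_t - i_{t-1}-1}$. Multiplying these factors together, dividing by $\delta_1 \cdots \delta_k$, and letting $\delta_1, \ldots, \delta_k \to 0$ recovers the density, which is precisely the stated expression; the support restriction $x_{(i_1)} < \cdots < x_{(i_k)}$ is forced by the requirement that all gap sizes be nonnegative.

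The main obstacle is the limiting and approximation step rather than the algebra: one must make rigorous that multi-occupancy events (two or more observations landing in a single window) contribute only $o(\delta_1 \cdots \delta_k)$ and may be discarded, and that $P(X \in (x, x+\delta]) = f(x)\,\delta + o(\delta)$ holds, which uses absolute continuity of $F$ with density $f$. As a cross-check that sidesteps the infinitesimal argument, I would mention marginalizing the joint density of all $n$ order statistics, namely $n!\prod_{i=1}^n f(y_i)$ on the simplex $y_1 < \cdots < y_n$, by integrating out the coordinates with indices not in $\mathcal{S}$; the nested integrals over each gap produce the factors $[F(x_{(i_t)}) - F(x_{(i_{t-1})})]^{\,i_t - i_{t-1}-1}$ together with the reciprocal factorials, reproducing $g(n, i_{(\mathcal{S})})$ and confirming the formula.
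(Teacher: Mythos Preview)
Your proof sketch is correct and follows the classical infinitesimal multinomial argument that is standard for this result. However, note that the paper does not actually prove this lemma: it is simply stated with a citation to Arnold, Balakrishnan, and Nagaraja, \emph{A First Course in Order Statistics}, as the continuous counterpart to the discrete Lemma~\ref{lem:OS_dist}. So there is no ``paper's own proof'' to compare against; you have supplied precisely the textbook argument that the cited reference contains. Your alternative cross-check via marginalizing the full joint density $n!\prod_i f(y_i)$ over the simplex is also a valid and equally standard route.
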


\section{Proof of Lemma~\ref{lemma:Binary}}
\label{app:BernEx}
First, for any $i \in [n]$, by Lemma~\ref{lem:OS_dist}, we have
\begin{align*}
P(X_{(i)}=0) &=  \sum_{k=i}^n  { n \choose k}  (1-p)^k p^{n-k}, \\
P(X_{(i)}=1) &=  1-P(X_{(i)}=0).
\end{align*} 
Thus,  $X_{(i)}$ is Bernoulli distributed with success probability $v(i)$, i.e., $X_{(i)} \sim \text{Ber}(v(i))$, where
\begin{align*}
&v(i) = 1-P(X_{(i)}=0) = 1 -  \sum_{k=i}^n  { n \choose k}  (1-p)^k p^{n-k} \notag \\
&= \sum_{k=0}^n  { n \choose k}  (1-p)^k p^{n-k} -  \sum_{k=i}^n  { n \choose k}  (1-p)^k p^{n-k}.
\end{align*}
Notice that $v(i) = P( B < i)$ where $B$ is a Binomial$(n, 1-p)$ random variable. 

We first consider the measure $\mathsf{r}_1(i,X^n)= H(X_{(i)})$ where the equality follows by Theorem~\ref{thm:RepresentationsCond}. Since $X_{(i)} \sim \text{Ber}(v(i))$, the entropy is given by 
\begin{align}
\mathsf{r}_1(i,X^n)=  H(X_{(i)})=h_b(v(i)) ,
\label{eq:EntropyOfOrderBernoulli}
\end{align}
where $h_b(t) := -t \log(t) -(1-t)\log(1-t)$ is defined to be the binary entropy function. 

Next, we focus on the metric $\mathsf{r}_3(i,X^n)$. By Theorem~\ref{thm:RepresentationsCond}, we have
$\mathsf{r}_3(i,X^n) =   \E[ (X_{(i)} -\E[X_{(i)} ] )^2]=  \mathsf{Var}(X_{(i)}).$
By the result just discussed, $X_{(i)} \sim \text{Ber}(v(i))$ and therefore
\begin{equation} 
\mathsf{Var}(X_{(i)}) = v(i)(1-v(i)) =  P( B < i)  P( B \ge i). \label{eq:VarianceInOrderBernoulli}
\end{equation}
Finally, we study the measure $\mathsf{r}_2(i,X^n)$. We have
\begin{align}
\mathsf{r}_2(i,X^n) &= \E\left[ \| \E[X^n]- \E[X^n| X_{(i)}]\|^2 \right] \notag  \\
&= \sum_{j=1}^n \E\left[ ( \E[X_j]- \E[X_j| X_{(i)}]  )^2 \right].
\label{eq:r_2_bernoulli}
\end{align}
Now consider just a single term inside the sum in \eqref{eq:r_2_bernoulli}:
\begin{align}
& \E\left[ ( \E[X_j]- \E[X_j| X_{(i)}] )^2 \right] = \E\left[( p- \E[X_j| X_{(i)}]  )^2 \right]  \notag \\
 &=  p^2 + \E\left[ (\E[X_j| X_{(i)}] )^2\right] - 2p \E\left[ \E[X_j| X_{(i)}] \right]  \notag \\
 &=  \E\left[ (\E[X_j| X_{(i)}] )^2\right] - p^2.
\label{eq:r_2_bernoulli_eq2}
\end{align}
Moreover, we notice that
\begin{align}
\E\left[ (\E[X_j| X_{(i)}] )^2 \right] &= P(X_{(i)} = 1)\left(\E[X_j| X_{(i)} = 1] \right)^2 \notag \\
& + P(X_{(i)} = 0) \left(\E[X_j| X_{(i)} = 0] \right)^2 .
\label{eq:r_2_bernoulli_eq3}
\end{align}
With the above in mind, we study the expectations $\E[X_j| X_{(i)} = 1] = P(X_j = 1| X_{(i)} = 1)$ and $\E[X_j| X_{(i)} = 0] = P(X_j = 1| X_{(i)} = 0)$. First, by Bayes rule,
\begin{align*}
P(X_j = 1| X_{(i)} = 1) &= \frac{P(X_{(i)} = 1 | X_j = 1) P(X_j = 1)}{P(X_{(i)} = 1)} \\
&= \frac{p \cdot P(X_{(i)} = 1 | X_j = 1) }{v(i)}.
\end{align*}
Now we study the probability $P(X_{(i)} = 1 | X_j = 1)$.  First notice that this equals the probability that there are at least $n-i+1$ total $1's$ in the sample $X^n$, given that $X_j = 1$, or in other words, this equals the probability that there are at least $n-i$ total $1's$ from the $n-1$ other sample values (excluding the $j^{th}$ one). Using this rationale,
\begin{align*}
&P(X_{(i)} = 1 | X_j =1) = \sum_{k=n-i}^{n-1} { n-1 \choose k}  (1-p)^{n-1-k} p^{k}\\
&=\sum_{k=0}^{i-1} { n-1 \choose k}  (1-p)^{k} p^{n-1-k} =P(B' < i),
\end{align*}
where $B' \sim \text{Binomial}(n-1, 1-p).$ Putting this all together, we have that
$\E[X_j| X_{(i)} = 1] =  \frac{p}{v(i)} P(B' < i).$
Similar reasoning, and the fact that $P(X_{(i)} = 0 | X_j = 1) = 1- P(X_{(i)} = 1 | X_j = 1)$, shows that
$\E[X_j| X_{(i)} = 1] =  \frac{p}{1-v(i)} P(B' \geq i ). $
Now, plugging the above results into the work in \eqref{eq:r_2_bernoulli}-\eqref{eq:r_2_bernoulli_eq3},
\begin{align*}
&\mathsf{r}_2(i,X^n) =  \frac{np^2}{v(i)}[ P(B' < i)]^2 +  \frac{np^2}{1-v(i)} [ P(B' \geq i)]^2 - np^2,
\end{align*}
where recall that $v(i) = P(B < i)$.

\section{Proof of Lemma~\ref{example:uniformContinious}}
\label{app:UnifExamp}
If $X_i's$ are i.i.d.\ $\sim \mathcal{U}(0,a)$, then $ \frac{1}{a}X_{(i)} \sim {\rm Beta}(i, n-i+1)$ with mean and variance  given by
\begin{align}
\E[X_{(i)}] &= \frac{a i}{n+1}, \ \text{and} \
\mathsf{Var}(X_{(i)})=  \frac{a^2 i (n+1-i)}{ (n+1)^2 (n+2)} .
\label{eq:beta_var}
\end{align} 
Thus, by Theorem~\ref{thm:RepresentationsCond}, we have
\begin{align*}
\mathsf{r}_3(i,X^n) &=   \E\left[ (X_{(i)} -\E[X_{(i)} ] )^2 \right] \\
&=  \mathsf{Var}(X_{(i)})
 = \frac{a^2 i (n+1-i)}{ (n+1)^2 (n+2)}.
\end{align*}
By taking the first derivative of $\mathsf{r}_3(i,X^n)$ above with respect to $i$ and equating it to zero, we obtain $i^\star_3(X^n)$ as in~\eqref{eq:i3starUn}.

We now compute  $ \mathsf{r}_2(i,X^n)$.  Using~\eqref{eq:m2_solvedcond}, we have
\begin{align}
\mathsf{r}_2(i,X^n) &= \E\left[ \| \E[X^n]- \E[X^n| X_{(i)}] \|^2 \right]  \notag \\
&= \sum_{j=1}^n \E\left[ ( \E[X_j]- \E[X_j| X_{(i)}] )^2\right].\label{eq:uniform_step1}
\end{align}
Now we look at computing the expectation $\E[X_j| X_{(i)} = x_{(i)} ]$.  By the law of total expectation,
\begin{align}
&\E[X_j | X_{(i)}= x_{(i)} ] \notag \\
&= \E \left[X_j | X_{(i)} = x_{(i)}, \{    X_j = X_{(i)} \} \right ] {P} (X_j  = X_{(i)} ) \notag\\
& \quad +   \E \left[X_j | X_{(i)}= x_{(i)}, \{   X_j <  X_{(i)}  \} \right ]{P} ( X_j <  X_{(i)}  ) \notag\\
&\quad +  \E \left[X_j | X_{(i)}= x_{(i)}, \{   X_j >  X_{(i)}  \} \right ] {P}( X_j >  X_{(i)} ).
\label{eq:lawoftotexp}
\end{align} 
Now we simplify the three terms of the above.  First notice that the probabilities can be computed using the fact that any $X_j$ is equally likely to produce the $i$-th order statistic, so
\begin{align*}
&{P} (X_j  = X_{(i)}) = \frac{1}{n}, \\
&{P} ( X_j <  X_{(i)}  ) =  \frac{i-1}{n},  \\
&{P} ( X_j >  X_{(i)} ) =  \frac{n-i}{n}.
\end{align*} 
Next we compute the expectations in~\eqref{eq:lawoftotexp}. Clearly, $ \E \left[X_j | X_{(i)} = x_{(i)}, \{    X_j =X_{(i)} \} \right ] = x_{(i)}$.  Moreover, we note that $X_j$ is independent of the event $\{X_{(i)}= x_{(i)}\}$ given $\{   X_j >  X_{(i)}  \}$ and hence
\[\E \left[X_j | X_{(i)}, \{   X_j >  X_{(i)}  \} \right ] = \E \left[X_j |    X_j >  x_{(i)} \right ] = \frac{a + x_{(i)}}{2}.\]
Similarly,
\[\E \left[X_j | X_{(i)}, \{   X_j <  X_{(i)}  \} \right ] = \E \left[X_j |  X_j <  x_{(i)} \right ]  = \frac{x_{(i)}}{2}.\]
Plugging these results into \eqref{eq:lawoftotexp}, we find
\begin{align}
2n \E[X_j | X_{(i)}= x_{(i)} ] &=   2 x_{(i)}   +    (i-1) x_{(i)} +  (n-i) (a+x_{(i)})  \notag\\
&=  (1 + n)x_{(i)}    + a (n - i) .
\label{eq:final_cond_exp}
\end{align} 
Now we use the result in~\eqref{eq:final_cond_exp} to simplify \eqref{eq:uniform_step1}.  
First,
\begin{align*}
&\mathsf{r}_2(i,X^n) = \sum_{j=1}^n \E\left[ ( \E[X_j]- \E[X_j| X_{(i)}] )^2 \right] 
\\&=   \sum_{j=1}^n   (\E[X_j])^2 - 2 \E[X_j]  \E\left[ \E[X_j| X_{(i)}] \right] +   \E\left[ (\E[X_j| X_{(i)}] )^2\right] \\
&=  -n  (\E[X_1])^2 +  \sum_{j=1}^n  \E\left[ (\E[X_j| X_{(i)}] )^2\right],
\end{align*}
where in the final equality we have used $ \E[X_j]  \E[ \E[X_j| X_{(i)}] ]  =  (\E[X_j])^2$ and that $ (\E[X_j])^2 =  (\E[X_1])^2$ for all $j \in [n]$.  Therefore, using that $n(\E[X_1])^2 = na^2/4$ and plugging the result in~\eqref{eq:final_cond_exp} into the above, we have
\begin{align*}
&\mathsf{r}_2(i,X^n) =  \frac{-na^2}{4} +  \frac{1}{4n}  \E \left[ \left(  (1 + n)X_{(i)}    + a (n - i)  \right)^2\right]  \\
&=   \frac{-na^2}{4} +  \frac{(n+1)^2}{4n}  \E  \left[  \Big(  X_{(i)}    + \frac{a (n - i)}{n+1}   \Big)^2  \right]  \\
&=  \frac{-na^2}{4} \\
&\quad +  \frac{(n+1)^2}{4n}  \left[\E  [  X^2_{(i)}]    + \Big(\frac{a (n - i)}{n+1}  \Big)^2 + 2 \E [ X_{(i)}] \frac{a (n - i)}{n+1} \right] .
\end{align*}
Next, note that by \eqref{eq:beta_var},
\begin{align*}
\E[X_{(i)}^2]  &= \mathsf{Var}[X_{(i)}] + (\E[X_{(i)}])^2  \\
&=  \frac{a^2 i(n+1-i)}{(n+1)^2(n+2)} + \frac{a^2 i^2}{(n+1)^2} =  \frac{a^2i(i+1)}{(n+1)(n+2)}.
\end{align*}
Therefore, using the above and $\E[X_{(i)}]=  a i/(n+1)$,
\begin{align*}
&\mathsf{r}_2(i,X^n) \\
&=  \frac{-na^2}{4} \\
&\quad +\!  \frac{(n+1)^2}{4n}  \left[ \frac{a^2i(i+1)}{(n+1)(n+2)}    \!+\! \frac{a^2 (n - i)^2}{(n+1)^2}  \!+\! \frac{2a^2 i (n - i)}{(n+1)^2} \right]  \\
&=    \frac{a^2  [ (n+1)i(i+1)   + (n+i)(n - i)(n+2)  - n^2(n+2)] }{4n(n+2)}\\
&=    \frac{a^2 i (n+1-i)}{ 4 n (n+2)} ,
\end{align*}
which has maximum value for $i^\star_2(X^n)$ as reported in~\eqref{eq:i3starUn}.

\bibliography{refs}
\bibliographystyle{IEEEtran}

\end{document}